\documentclass[12pt, draftclsnofoot, onecolumn]{IEEEtran}
\usepackage{setspace}
\usepackage{amsmath}
\usepackage{amssymb}
\usepackage{amsfonts}
\usepackage{epsfig}
\usepackage{cite}
\usepackage{enumerate}
\usepackage{float}
\usepackage{afterpage}
\usepackage{makeidx}
\usepackage{graphicx}
\usepackage{epstopdf}
\usepackage{caption}
\usepackage{subcaption}
\usepackage{multicol}
\usepackage{multirow}
\usepackage{amsthm}
\newtheorem{theorem}{Theorem}

\newcommand{\nn}{\nonumber}
\usepackage{color}
\usepackage{bigints}
\usepackage{array}
\linespread{1.43}

\begin{document}	
\title{Performance of Energy-Buffer Aided  Incremental Relaying in Cooperative Networks}
\author{{Dileep Bapatla, \emph{Student Member, IEEE} and Shankar Prakriya, \emph{Senior Member, IEEE}}\thanks{Dileep Bapatla and Shankar Prakriya are with the Department of Electrical Engineering, Indian Institute of Technology Delhi, (IIT Delhi) New	Delhi 110016, India (e-mail: Dileep.Bapatla@ee.iitd.ac.in, shankar@ee.iitd.ac.in).}	}	
\maketitle	
\begin{abstract}	
In this paper, we consider a two-hop cooperative network with a direct link based on an energy harvesting (EH) decode-and-forward relay. The energy-buffer equipped relay harvests energy from the ambience, and uses the harvest-store-use (HSU) architecture.  
Since it is known that using a discrete-state Markov chain to model the energy buffer is inaccurate even for moderate number of states, we use a discrete-time continuous-state space Markov chain instead. We derive the limiting distribution of energy for both incremental on-off policy (IOFP) and the incremental best-effort policy (IBEF), and use them to obtain expressions for outage probability and throughput. The corresponding expressions for non-incremental signalling follow as a special case. We show that stable buffers using IBEP harness a diversity of two as compared to those using IOFP, which attain a diversity of one. However, while buffers using IBEF are consequently more reliable than those with IOFP, their throughput performance is only marginally superior. Simulation results are presented to validate the derived analytical expressions.
\end{abstract}
\IEEEpeerreviewmaketitle
\section{Introduction}
 Relays have been incorporated into several modern communication standards (e.g. LTE-A) due to their promise in increasing range and reliability of wireless networks \cite{M.N.Tehrani2014}. Relays are needed when the direct link is shadowed, when the source has power limitations,  or when diversity gain needs to be harnessed.  For this reason, cooperative relay networks have been extensively studied in recent years. Regenerative or decode-and-forward (DF) relays, as well as non-regenerative amplify-and-forward (AF) relays have been widely studied.
 \par Conventionally, relays are assumed to be equipped with a power source or a battery  \cite{S.Bi2016}. Of late however, there has been considerable research interest in relays that are powered by energy harvesting (EH). In applications like mobile sensor networks and wireless body area networks \cite{H.Mosavat-Jahromi2017},  use of a nearby node to serve as a relay to communicate to a distant node is well motivated. In a 5G pico-cell communication framework, use of EH by nodes serving as relays is being explored. In these applications, the node serving as a relay is often battery powered, and use of its battery energy to provide relaying services is undesirable. This is because battery life-times are typically short, and replacement of batteries is often difficult (and in some applications, impossible). Use of EH  to make the devices battery-less is therefore desirable. In other scenarios, the node serving as a relay may seek to supplement the harvested energy with as little battery energy as possible to ensure long battery lifetimes \cite{Modem}. Use of relays powered by EH is also motivated by green energy considerations - telecom is known to be a large consumer of energy. Analysis of performance of cooperative communication links with EH relays  is clearly of considerable  interest.
 \par  EH is readily possible from natural sources like light, wind, vibration, and ambient radio frequency (RF) signals  \cite{S.Sudevalayam2011}.  Simultaneous wireless information and power transfer (SWIPT), which exploits the possibility of RF signals carrying both energy and information simultaneously,  has attracted considerable research interest \cite{X.Zhou2013}. However, while they enable communication with a battery-less relay, they are not energy efficient (green) due to path-loss in practical channels. The same is true of  power beacons \cite{Y.Ma}.  In this paper, we focus on relays powered by ambient sources \cite{S.Yin}.
\par  Despite the apparent similarity between energy and data buffers \cite{N.Zlatanov2013}, \cite{B.Kumar}, there is one fundamental difference - while data buffers are essentially discrete, energy is a continuous variable. However, virtually all literature in this area is based on discrete-state energy buffers \cite{Moradian}.  It has been shown that for accurate representation, the number of states needs to be  quite large \cite{Y.Gu2015} (close to $200$ \cite{vucetic2018},\cite{I.Krikidis2012}), which make the approach difficult and cumbersome. Also,  it is difficult to draw physical insights into performance of such systems with this approach. 
\par When the direct channel from source to destination is not shadowed, it is advantageous to combine the signals from source and relay optimally. In addition,  use of incremental relaying to improve spectral efficiency is well motivated.   
While \cite{N.T.Van} is based on harvest-use (HU) architecture,  \cite{Z.Li} uses a discrete state space Markov chain to model the buffer (the relay harvests energy from source).  \par Recently, it has been shown that a discrete-time continuous-state Markov chain model can be used to model the energy buffer accurately in point-to-point links \cite{R.Morsi2017}.  Performance depends on the policy used to operate the buffer. With the   best-effort energy storage  policy  \cite{R.Morsi2015}, the node transmits by drawing constant $M$ amount of energy when sufficient energy exists, and drawing entire energy in the buffer otherwise.  With the on-off storage policy \cite{R.Morsi2014}, the node transmits using  $M$ amount of energy when it is available, and remains silent otherwise.  Performance of single-hop wireless powered communication  with both these models was analyzed, and derivation for the asymptotic distribution of energy in the buffer was presented in \cite{R.Morsi2017}. In this paper, we analyze the performance of cooperative links with buffered relays using IBEP and IOFP, and derive expressions for asymptotic distributions of energy. Expressions for non-incremental signalling follow as a special case. To the best of our knowledge there has been no work on performance of cooperative links with energy buffers modeled using a discrete-time continuous-state Markov chain\footnote{A conference version of this paper dealing  only with IBEP was submitted to IEEE VTC 2019 \cite{Dileep}. }. The significant contributions are as follows:
\begin{enumerate}
	\item Unlike all prior work on EH relays with energy buffers  that are based on discrete state Markov chain  \cite{vucetic2018},\cite{I.Krikidis2012}, in this paper we analyze performance of cooperative links using a continuous state space Markov chain to model the energy buffer. Also, unlike most other works on EH \cite{A.A.Nasir}, we do not ignore the direct path from source to destination, and optimally combine the direct and relayed signals at the destination.
	\item For incremental signalling (IBEP as well as IOFP), we derive an expression for the limiting (steady-state) distribution of energy in the energy buffer, and establish conditions for its existence. 
	\item Expressions are derived for outage probabilities and throughput with both IBEP and IOFP. Expressions for non-incremental signalling follow as a special case. We compare performance with the HU architecture and  direct  (relay-less) transmission. 
\item We establish that the IBEP attains a diversity order of two, while the IOFP attains a diversity of one, making links with the former policy more reliable. However, the throughput performance with IBEP is only marginally superior to that with IOFP.
\end{enumerate}
\subsection{Notations:} 
${\cal CN}\sim \left(\mu,\sigma^{2}\right)$	denotes the  complex Gaussian distribution with mean $\mu$ and variance $\sigma^{2}$, $\sim$ denotes ``distributed as", and $|x|$ denotes the absolute value of $x$. $ \Pr \left\lbrace A\right\rbrace $ indicates probability of the event $A$. $\mathcal{W}\left(\cdot\right) $ is the principal branch of the Lambert-W function.  $ \min \left( a,b\right) $ denotes minimum of $a,b$. $\mathbb{E} \left\lbrace \cdot\right\rbrace $ denotes the expectation operator. $F_{X}\left( x\right) $ and $f_{X}\left( x\right) $ denote the cumulative distribution function (CDF) and the probability density function (PDF) of random variable $X$. 
\section{System Model}
The network (as depicted in Fig.~\ref{coop}) consists of source S, a DF relay R, and a destination D. While the source S  is equipped with a power supply, the relay R is powered solely by harvested ambient energy \cite{S.Yin}. Quasi-static Rayleigh fading is assumed. Let   $d_{ab}$ denote the normalized distance between nodes with $a\in [S,R]$, and $b\in [R,D]$. Denote by $h_{_{SR}}(i)\sim{\cal CN}(0,d_{_{SR}}^{-\alpha})$, 	$h_{_{RD}}(i)\sim{\cal CN}(0,d_{_{RD}}^{-\alpha})$,  and $h_{_{SD}}(i)\sim{\cal CN}(0,d_{_{SD}}^{-\alpha})$ the channels in the $i^{th}$ signalling interval between S and R, R and D, and S and D respectively (where $\alpha$ is the path-loss exponent). 

\begin{figure}[h!]
	\centering
	\includegraphics[scale=1.4]{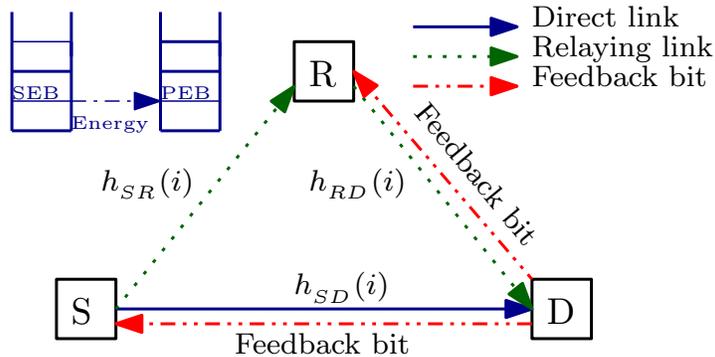}
	\caption{Energy buffer-aided incremental relaying cooperative communication. If D always sends a negative acknowledgment (through the feedback bit), the non-incremental case results.}
	\label{coop}
	\vspace*{-0.8 cm}
\end{figure}



\par   Two types of EH architectures have been proposed in literature  \cite{S.Sudevalayam2011}.
 With the HSU (Harvest-store-use) architecture,  the relay is equipped with a rechargeable device (battery for example) that can store and release energy. Since storage devices cannot charge and discharge at the same time \cite{S.Luo2013},\cite{B.Zhang2016}, harvested energy is stored in a secondary energy buffer (SEB), which is typically a super-capacitor, and transferred to the primary energy buffer (PEB), which is typically a battery, at the end of the signalling interval in a negligible amount of time \cite{B.Zhang2016}. On the other hand, with HU (Harvest-use) architecture,  all the energy that is harvested is used in the same signalling interval (for example architectures based on super-capacitors).
It is the HSU architecture that is of primary interest in this paper. For comparison purpose, we also consider direct (relay-less) transmission from source to destination, as well as the HU architecture. The reason for use of the feedback bit will become clear in what follows. 
 \par For both HSU and HU architectures, the signalling interval is divided into two phases of equal duration. The overall signalling interval $T$  is normalized to unity for convenience.  In the first phase (phase I) of normalized duration $1/2$, the source S transmits unit-energy information symbols $x_S(i)$ at rate $R_{0}$ with power $P_{_{S}}$ to the relay R and destination D. The received signals $y_{_{SD}}(i)$ and $y_{_{SR}}(i)$ at D and R in phase I of the $i\textsuperscript{th}$ signalling interval are given by: 
\begin{eqnarray}
	&y_{_{SD}}(i)=\sqrt{P_{_{S}}}h_{_{SD}}(i) x_{_{S}}(i) +n_{_{SD}}(i), &\text{Phase I}\\
	&y_{_{SR}}(i)=\sqrt{P_{_{S}}}h_{_{SR}}(i) x_{_{S}}(i) +n_{_{SR}}(i), & \text{Phase I}
	\label{1-3}
\end{eqnarray}
where $n_{_{SD}}(i),n_{_{SR}}(i) \sim {\cal CN}\left(0,\sigma^{2}\right)$ are additive noise samples. The instantaneous link signal to noise ratios (SNRs) in the $i\textsuperscript{th}$ signalling interval at D and R in the first phase are given by:
\begin{eqnarray}
&\gamma_{_{SD}}(i)=\displaystyle\frac{P_{_{S}}|h_{_{SD}}(i)|^{2}}{\sigma^{2}}, ~~\gamma_{_{SR}}(i)=\frac{P_{_{S}}|h_{_{SR}}(i)|^{2}}{\sigma^{2}}. 
\label{link_snr}
\end{eqnarray}	
Let:
\begin{eqnarray}
\Gamma_{th}^{\prime} &=& \left\{ \begin{array}{ll} \Gamma_{th}=2^{2R_{0}}-1 & \text{Incremental Signalling,}\\
\infty & \text{Non-incremental Signalling.}
 \end{array}\right.
\end{eqnarray}
It will become apparent later that this definition allows us to analyze the performance of incremental and non-incremental relaying in a unified fashion.\\
{\bf Case $\gamma_{_{SD}}(i)\geq \Gamma^{\prime}_{th}$}:\  If  the received SNR at D is greater than (or equal to) the threshold $\Gamma^{\prime}_{th}$ (signalling is successful in the first phase itself),  D sends a positive acknowledgment (ACK), which is received by both R and S. On receipt of ACK, S directly transmits {\em a new set of symbols}  to D in the second phase (phase II), making the overall SNR $\gamma_D(i)=\gamma_{_{SD}}(i)$. Note that an ACK can never be sent for non-incremental signalling.\\
{\bf Case $\gamma_{_{SD}}(i)<\Gamma^{\prime}_{th}$}:\ 
 If received SNR at D is less than the threshold in the first phase, D sends a negative acknowledgment (NACK), which is received by both S and R.   Clearly, a NACK is always sent for non-incremental signalling. In this case, the fixed DF  \cite{J.N.Laneman} R decodes $x_{_{S}}(i)$, re-encodes, and then transmits the information as unit-energy symbols $x_R(i)$ at rate $R_0$ with power $P_{_{R}}(i)$ to D using the energy harvested (hence the dependence of $P_{_{R}}(i)$ on $i$). Clearly, the signal $y_{_{RD}}(i)$ received by D in this phase is given by:
\begin{eqnarray}
&y_{_{RD}}(i)=\sqrt{P_{_{R}}(i)}h_{_{RD}}(i) x_{_{R}}(i) +n_{_{RD}}(i), & \text{Phase II}
\end{eqnarray}
where $n_{_{RD}}(i) \sim {\cal CN}\left(0,\sigma^{2}\right).$ Clearly, $\gamma_{_{RD}}(i)= P_{_{R}}(i)|h_{_{RD}}(i)|^{2}/\sigma^{2}.$  The signals received from S and R are combined using maximal ratio combining (MRC) at D. The overall SNR $\gamma_D(i)$ at D is given by $\gamma_{_{SD}}(i)+\gamma_{_{RD}}(i)$.  
\section{Limiting distribution of Energy with HSU Architecture}
In this section, we derive expressions for the limiting distributions of energy in the PEB with both IBEP and IOFP, assuming it to be of infinite-size. Since the energy harvested is typically small, this assumption  is not limiting. Those with non-incremental signalling follow as special cases (with $\Gamma_{th}^{\prime}=\infty$), and will not be written explicitly.

\subsection{Limiting Distribution of IBEP}
\par The incoming harvested energy  $X(i)$ is assumed to be an exponential random  variable (as in \cite{S.Luo2013},\cite{B.Zhang2016}) with PDF $f_{X}(x)$.  Denote by $B(i)$ the energy level in the buffer in the $i^{th}$ signalling interval. Let $M$ denote the energy drawn for every relay transmission. The PEB's buffer  update equation is then given by:
\begin{eqnarray}
B(i+1) & = &  B(i)+X(i) \hspace{1.5 cm} \left( \gamma_{_{SD}}(i) \geq \Gamma^{\prime}_{th}\right) , \nonumber\\ 
B(i+1) & = & B(i)-M+X(i) \hspace{0.6 cm} (\gamma_{_{SD}}(i)<\Gamma^{\prime}_{th})\cap (B(i) \geq M), \nonumber\\ 
B(i+1) & = & X(i) \hspace{3 cm} (\gamma_{_{SD}}(i)<\Gamma^{\prime}_{th})\cap (B(i)<M).
\label{BI2}
\end{eqnarray}
Let $ W_{1},W_{2},W_{3} $ and $ W_{4} $  be defined as 
\hspace{0.1 cm}  $W_{1}=\dfrac {\sigma^2d_{_{RD}}^\alpha}{2M}$,\hspace{0.01 cm}  $W_{2}=\dfrac {\sigma^2d_{_{SD}}^\alpha}{P_{_{S}}}$,\hspace{0.01 cm}   $W_{3}=\dfrac {\sigma^2d_{_{RD}}^\alpha}{2}$ \hspace{0.01 cm} and \hspace{0.01 cm} $W_{4}=\dfrac {\sigma^2d_{_{SR}}^\alpha}{P_{_{S}}}$.
 Let
\begin{align}
 \phi_{inc} = \frac{M\left( 1-e^{-W_{2}\Gamma^{\prime}_{th}} \right)}{\mathbb{E}{\{X(i)}\}}=M\lambda_{1}\left( 1-e^{-W_{2}\Gamma^{\prime}_{th}} \right),
\label{phinc}
\end{align} 
where $\mathbb{E}{\{X(i)}\}=\displaystyle \frac{1}{\lambda_{1}}$. Note that the equivalent constant $\phi_{ninc}$ in the non-incremental case is given by $\phi_{ninc}=M\lambda_1$ (using $\Gamma_{th}^{\prime}=\infty$). The limiting distributions when $ \phi_{inc} >1 $ is presented in Theorem~\ref{Theorem_Limiting_PDF}, and the case when $ \phi_{inc} \leq 1 $ is discussed in Theorem~\ref{Theorem_Limiting_PDF not exists}.

\begin{theorem}\label{Theorem_Limiting_PDF} 
For an infinite-size PEB based on the IBEP as in \eqref{BI2}, the limiting energy distribution  exists when $\phi_{inc}=M\lambda_1 (1-\exp(-W_2\Gamma^{\prime}_{th})) >1$, and its  limiting CDF $G(x)$ satisfies the following integral equation:
 \begin{align}
G(x)=\int_{u=0}^{x}\left[ e^{-W_{2}\Gamma^{\prime}_{th}}f_{X}\left( x-u\right)G\left(u\right)+\left( 1-e^{-W_{2}\Gamma^{\prime}_{th}}\right)f_{X}\left( u\right)G\left( x+M-u\right) \right] du,
\label{limcdfintegr}
\end{align}
whose solution is given by:
\begin{align}
G(x)=\left( 1-\exp\left( -Z_{1}x\right) \right),
\label{limcdf}
\end{align}
where
$$ \qquad {Z_{1}}  =  \displaystyle \frac{\mathcal{W}\left( -\left( 1-e^{-W_{2}\Gamma^{\prime}_{th}} \right) \lambda_{1}M\exp\left( -\left( 1-e^{-W_{2}\Gamma^{\prime}_{th}}\right) \lambda_{1}M \right) \right) }{M}  
+\left( 1-e^{-W_{2}\Gamma^{\prime}_{th}} \right) \lambda_{1}.$$ 
\end{theorem}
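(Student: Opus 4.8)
I would view $\{B(i)\}$ as a discrete-time, continuous-state Markov chain on $[0,\infty)$ with transition law \eqref{BI2}, and organize the argument as: establish existence of the limiting law exactly when $\phi_{inc}>1$, derive the balance equation \eqref{limcdfintegr} it must satisfy, then solve \eqref{limcdfintegr} in closed form. For existence, observe first that $\gamma_{_{SD}}(i)$ is exponential with mean $1/W_{2}$, so $\Pr\{\gamma_{_{SD}}(i)\ge\Gamma^{\prime}_{th}\}=e^{-W_{2}\Gamma^{\prime}_{th}}$, and that $\gamma_{_{SD}}(i)$ and $X(i)$ are i.i.d.\ in $i$ and independent of $B(i)$. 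Taking the Lyapunov test function $V(b)=b$, the one-step mean drift for $b\ge M$ equals $\mathbb{E}\{X(i)\}-M\left(1-e^{-W_{2}\Gamma^{\prime}_{th}}\right)=\lambda_{1}^{-1}-M\left(1-e^{-W_{2}\Gamma^{\prime}_{th}}\right)$, which is strictly negative precisely when $\phi_{inc}=M\lambda_{1}\left(1-e^{-W_{2}\Gamma^{\prime}_{th}}\right)>1$; a Foster--Lyapunov / Harris-recurrence argument (in the spirit of \cite{R.Morsi2017}) then yields positive recurrence, hence the existence and uniqueness of the limiting distribution in that regime (and only then).

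Next I would derive \eqref{limcdfintegr}. Working in the stationary regime, write $G(x)=\Pr\{B(i+1)\le x\}$ and condition on the three mutually exclusive events of \eqref{BI2}, using independence of $X(i)$ from $B(i)$. The event $\gamma_{_{SD}}(i)\ge\Gamma^{\prime}_{th}$ contributes $e^{-W_{2}\Gamma^{\prime}_{th}}\int_{0}^{x} f_{X}(x-u)G(u)\,du$ (a convolution of $G$ with $f_{X}$). The remaining mass $1-e^{-W_{2}\Gamma^{\prime}_{th}}$ splits according to whether $B(i)\ge M$ (giving $B(i)-M+X(i)$) or $B(i)<M$ (giving $X(i)$), contributing $\left(1-e^{-W_{2}\Gamma^{\prime}_{th}}\right)\left[\int_{M}^{x+M} g(b)F_{X}(x+M-b)\,db + F_{X}(x)G(M)\right]$ with $g=G'$; an integration by parts (equivalently the substitution $v=x+M-b$) collapses this bracket to $\int_{0}^{x} f_{X}(u)G(x+M-u)\,du$. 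Summing the two contributions gives \eqref{limcdfintegr}.

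Finally I would solve \eqref{limcdfintegr}. Substituting the ansatz $G(x)=1-\exp(-Z_{1}x)$ and $f_{X}(x)=\lambda_{1}e^{-\lambda_{1}x}$ makes every integral elementary; collecting the $e^{-\lambda_{1}x}$ and $e^{-Z_{1}x}$ terms, the equation holds for all $x\ge 0$ if and only if $Z_{1}=\lambda_{1}\left(1-e^{-W_{2}\Gamma^{\prime}_{th}}\right)\left(1-e^{-Z_{1}M}\right)$. (If one prefers not to guess the form, Laplace-transforming \eqref{limcdfintegr} drives one to the same exponential solution.) Setting $a=\lambda_{1}\left(1-e^{-W_{2}\Gamma^{\prime}_{th}}\right)$ and substituting $u=M(Z_{1}-a)$ converts this transcendental relation to $u e^{u}=-aM e^{-aM}$, so $u=\mathcal{W}\!\left(-aM e^{-aM}\right)$ and hence $Z_{1}=a+\mathcal{W}\!\left(-aM e^{-aM}\right)/M$, which is exactly \eqref{limcdf}. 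For the solution to be a genuine CDF one needs $Z_{1}>0$: with $t:=aM>1$ the principal branch takes the nontrivial preimage, $\mathcal{W}(-t e^{-t})\in(-1,0)$, so $Z_{1}\in\left(a-\tfrac1M,\,a\right)$, which is positive exactly because $aM=\phi_{inc}>1$; if $\phi_{inc}\le 1$ then $\mathcal{W}(-t e^{-t})=-t$ and $Z_{1}=0$, i.e.\ no proper limiting law, matching Theorem~\ref{Theorem_Limiting_PDF not exists}.

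The step I expect to be the real obstacle is not the algebra but the two ``soft'' points: (a) rigorously justifying that the chain actually converges to a \emph{proper} law with an exponential tail (rather than merely exhibiting a solution of \eqref{limcdfintegr}), which forces the drift/recurrence analysis and a uniqueness argument; and (b) selecting the correct Lambert-$\mathcal{W}$ branch and tracking the sign of $Z_{1}$, since it is precisely this sign that encodes the threshold $\phi_{inc}>1$. Deriving \eqref{limcdfintegr} and verifying the closed form are routine once the conditioning and the elementary integrals are set up carefully.
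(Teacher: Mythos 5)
Your proposal is correct, and its computational core coincides with the paper's Appendix~\ref{Appendix_Limiting_PDF}: the same conditioning on the three branches of \eqref{BI2} in steady state, the same integration-by-parts/substitution to reach \eqref{limcdfintegr}, the same exponential solution leading to the transcendental relation $Z_{1}=\lambda_{1}\left(1-e^{-W_{2}\Gamma^{\prime}_{th}}\right)\left(1-e^{-Z_{1}M}\right)$ (identical to the paper's root condition), and the same Lambert-$\mathcal{W}$ inversion and branch/sign discussion tying $Z_{1}>0$ to $\phi_{inc}>1$. Where you genuinely differ is in two places. First, existence: the paper does not run a recurrence argument at all --- it postulates a solution, requires $Z_{1}>0$ for the solution to be a proper CDF, and uses the property $\mathcal{W}\left(-xe^{-x}\right)=-x$ for $0<x\le 1$ to translate that into $\phi_{inc}>1$ (the complementary regime being handled separately in Theorem~\ref{Theorem_Limiting_PDF not exists} by a flow-conservation argument); your Foster--Lyapunov drift computation ($\mathbb{E}\{X\}-M\left(1-e^{-W_{2}\Gamma^{\prime}_{th}}\right)<0$ for $b\ge M$) plus Harris recurrence supplies the convergence-to-a-proper-limit justification that the paper leaves implicit, which is a real (if standard) strengthening. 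Second, the ansatz: the paper follows Lindley/Gani and posits $G(x)=\sum_{i}c_{i}e^{-Z_{i}x}$, obtaining the root $Z_{0}=0$, the condition $\sum_{i}c_{i}=0$, and then $c_{0}=1$, $c_{1}=-1$ from $G(0)=0$, $G(\infty)=1$; you substitute $G(x)=1-e^{-Z_{1}x}$ directly and note that matching the $e^{-Z_{1}x}$ coefficient forces the single transcendental relation (the $e^{-\lambda_{1}x}$ terms then cancel automatically, as they indeed do). Your route is slightly leaner; the paper's more general ansatz makes the origin of the two roots and the constants systematic. Both are sound.
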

\begin{proof}
Refer to Appendix \ref{Appendix_Limiting_PDF}. 
\end{proof} 
\begin{theorem}\label{Theorem_Limiting_PDF not exists}
 For an infinite-size PEB based on the IBEP as in \eqref{BI2}, the limiting energy distribution does not exist when $\phi_{inc}=M\lambda_1 (1-\exp(-W_2\Gamma^{\prime}_{th})) \leq 1$, and $M$ amount of energy is almost always available for relay transmission in the PEB.  Relay transmit power $P_{R}(i)$ equals $2M$  ($P_{_{R}}\left( i\right) =\dfrac{M}{\left( 1/2\right)}=2M$) almost surely.
\end{theorem}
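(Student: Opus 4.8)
The plan is to read the buffer recursion \eqref{BI2} as a reflected random walk on $[0,\infty)$ (a recursion of Lindley type) and to invoke the classical recurrence/transience dichotomy for such chains. First I would note that the ACK event $\{\gamma_{_{SD}}(i)\ge\Gamma^{\prime}_{th}\}$ has probability $e^{-W_{2}\Gamma^{\prime}_{th}}$ (since $\gamma_{_{SD}}(i)$ is exponential with mean $1/W_{2}$) and is i.i.d.\ across slots and independent of both $X(i)$ and the current buffer state; hence, away from the boundary, the chain moves by the i.i.d.\ increment $\Delta(i)$ that equals $X(i)$ on an ACK and $X(i)-M$ on a NACK, while the third line of \eqref{BI2} is merely the reset that keeps $B(i)$ nonnegative. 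The one-step mean drift is therefore
\[
\mathbb{E}\{B(i+1)-B(i)\mid B(i)\ge M\}=\mathbb{E}\{X(i)\}-\bigl(1-e^{-W_{2}\Gamma^{\prime}_{th}}\bigr)M=\frac{1}{\lambda_{1}}\bigl(1-\phi_{inc}\bigr),
\]
which is nonnegative precisely when $\phi_{inc}\le 1$.

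Second, I would treat the strictly transient case $\phi_{inc}<1$ (positive drift) by comparison with the free random walk $S(n)=\sum_{k\le n}\Delta(k)$: since $\mathbb{E}\{\Delta\}=\lambda_{1}^{-1}(1-\phi_{inc})>0$, the strong law gives $S(n)\to\infty$, and ladder-height theory gives $\inf_{n\ge 0}S(n)>-\infty$ almost surely, so once $B$ has climbed past that (a.s.\ finite) shortfall the reset region $\{B<M\}$ is never revisited; thus $B(i)\to\infty$ a.s., no limiting distribution can exist, and for all sufficiently large $i$ the best-effort rule draws exactly $M$, i.e.\ $P_{R}(i)=M/(1/2)=2M$ almost surely. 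For the critical case $\phi_{inc}=1$ (zero drift) the companion walk is Chung--Fuchs recurrent with $\liminf_{n}S(n)=-\infty$, so $\{B(i)\}$ is null recurrent; non-existence of a limiting distribution then follows from the fact that a null-recurrent chain admits no stationary probability law, and the ``$M$ almost always available'' claim follows from the ratio (Cesàro) ergodic theorem, since the invariant measure of this reflected mean-zero walk has finite mass on $[0,M)$ but infinite total mass, so the long-run fraction of slots with $B(i)<M$ --- equivalently the fraction of relay transmissions that find less than $M$ on hand --- is zero, and $P_{R}(i)=2M$ for almost every active slot.

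Third, to make non-existence airtight in the transient case rather than purely heuristic, I would append a one-line converse: any stationary CDF $G$ would have to satisfy \eqref{limcdfintegr}, and integrating the drift against $G$ would force $0=\mathbb{E}_{G}\{B(i+1)-B(i)\}\ge\lambda_{1}^{-1}(1-\phi_{inc})>0$ when $\phi_{inc}<1$, a contradiction --- this is exactly the instability of a Lindley-type recursion with ``load'' $\rho=\phi_{inc}^{-1}>1$; for $\phi_{inc}=1$ non-existence instead rests on the null-recurrence argument above. The step I expect to be the main obstacle is precisely this critical case $\phi_{inc}=1$: unlike the transient case, one cannot simply say ``the buffer blows up'', and some care with null recurrence and the ratio-ergodic theorem is needed to convert ``$B(i)$ returns to levels below $M$ infinitely often'' into ``the \emph{fraction} of relay transmissions with less than $M$ available is zero'', which is what the assertion $P_{R}(i)=2M$ almost surely really means in the long-run sense.
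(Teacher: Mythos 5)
Your proposal is correct in substance but follows a genuinely different route from the paper. The paper's Appendix~\ref{Appendix_Limiting_PDF not exists} argues entirely through sample-path averages: it writes the recursion as $B(i+1)-B(i)=X(i)-E_R(i)\mathbb{O}_{_{SD}}(i)$ with $E_R(i)=\min(B(i),M)$, invokes ergodicity and the law of large numbers to compare the average arrival rate $\mathbb{E}\{X(i)\}$ with the average departure rate bounded by $M(1-e^{-W_2\Gamma^{\prime}_{th}})$, and concludes for $\phi_{inc}<1$ that the buffer is ``absorbing'' so $\Pr(B(i)\geq M)\to 1$; for the critical case $\phi_{inc}=1$ it falls back on the balance identity \eqref{critically} together with a citation that critically balanced chains are typically unstable. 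You instead use the classical random-walk dichotomy: positive drift plus SLLN and ladder-height theory to get transience and only finitely many visits to $[0,M)$ when $\phi_{inc}<1$; Chung--Fuchs null recurrence plus the ratio ergodic theorem to show the occupation fraction of $\{B(i)<M\}$ vanishes when $\phi_{inc}=1$; and a stationarity/drift balance to exclude any invariant law (the $\phi_{inc}<1$ converse needs a truncation or Loynes-type argument if $\mathbb{E}_G\{B\}$ is not assumed finite, and the claim $\pi([0,M))<\infty$ for the $\sigma$-finite invariant measure should be verified, but both are standard). What your route buys is rigor precisely where the paper is weakest --- the critical case --- and it makes explicit that for $\phi_{inc}=1$ the statement ``$M$ is almost always available'' can only hold in the long-run-fraction sense, since the zero-drift chain returns below $M$ infinitely often; the paper's flow-conservation argument is shorter and more elementary but glosses over exactly this point.
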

\begin{proof}
	Refer to Appendix \ref{Appendix_Limiting_PDF not exists}. 
\end{proof}
\subsection{Limiting Distribution of IOFP}
 Unlike the on-off policy described for point-to-point links described in \cite{R.Morsi2014}, the relay R here becomes silent: 1) when it receives an ACK from the destination, or 2) when it does not have $M$ amount of energy in its PEB but NACK is received from D. With IOFP, the energy buffer update equation of the PEB is given by:
\begin{eqnarray}
B(i+1) & = &  B(i)+X(i) \hspace{1.9 cm} (\gamma_{_{SD}}(i) \geq \Gamma^{\prime}_{th}), \nonumber\\ 
B(i+1) & = & B(i)-M+X(i) \hspace{1 cm} (\gamma_{_{SD}}(i)<\Gamma^{\prime}_{th})\cap (B(i)\geq M), \nonumber\\ 
B(i+1) & = &B(i)+X(i) \hspace{2.1 cm} (\gamma_{_{SD}}(i)<\Gamma^{\prime}_{th})\cap (B(i)<M).
\label{BIonoff}
\end{eqnarray}
The limiting distribution when $ \phi_{inc} > 1 $ is presented in Theorem \ref{Theorem_Limiting_PDF On-Off}, and the case when $ \phi_{inc} \leq 1 $ is first discussed in Theorem~\ref{on-off not exists}. 
Let $\bar{I}$ be an indicator variable defined as: $\bar{I}=1 \hspace*{0.2 cm} \text{if} \hspace*{0.2 cm} B\left( i\right) \geq M $, and
$\bar{I}=0$ otherwise.

\begin{theorem}\label{on-off not exists}
	For the infinite-size PEB based on the IOFP as in \eqref{BIonoff}, the limiting distribution does not exist when $\phi_{inc} \leq 1$.  $M$ amount of energy is almost always available for transmission in the PEB, and  relay transmit power  $P_R\left( i\right) =\dfrac{M}{\left( 1/2\right)}=2M$ almost surely. 
\end{theorem}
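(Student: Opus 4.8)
The plan is to follow the same reasoning that underlies Theorem~\ref{Theorem_Limiting_PDF not exists}, but adapted to the update rule \eqref{BIonoff}. First I would view $\{B(i)\}$ as a time-homogeneous Markov chain on $[0,\infty)$ and introduce the NACK indicator $Y(i)=\mathbf{1}\{\gamma_{_{SD}}(i)<\Gamma^{\prime}_{th}\}$, which is Bernoulli with $\Pr\{Y(i)=1\}=1-e^{-W_{2}\Gamma^{\prime}_{th}}$ since $\gamma_{_{SD}}(i)$ is exponential with mean $1/W_{2}$. Under \eqref{BIonoff} the relay draws exactly $M$ units iff a NACK occurs \emph{and} $B(i)\geq M$, so for $b\geq M$ the one-step mean change is $\mathbb{E}\{X(i)\}-M(1-e^{-W_{2}\Gamma^{\prime}_{th}})=\lambda_{1}^{-1}(1-\phi_{inc})$, which is non-negative precisely when $\phi_{inc}\leq 1$, while for $b<M$ the mean change is $\mathbb{E}\{X(i)\}>0$.

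Next I would compare $B(i)$ with the ``free'' random walk $S(i)=B(0)+\sum_{k=0}^{i-1}\big(X(k)-M\,Y(k)\big)$, and show by induction over the three cases of \eqref{BIonoff} that $B(i)\geq S(i)$ for all $i$: in the ACK case and in the NACK-with-$B(i)\geq M$ case the difference $B(i)-S(i)$ is unchanged, and in the NACK-with-$B(i)<M$ case it increases by $M$, so the reflecting term only pushes $B(i)$ above $S(i)$. The increments of $S$ are i.i.d.\ with mean $\lambda_{1}^{-1}(1-\phi_{inc})\geq 0$ and finite variance.

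If $\phi_{inc}<1$, the increment mean is strictly positive, so by the strong law $S(i)/i\to\lambda_{1}^{-1}(1-\phi_{inc})>0$ and hence $B(i)\geq S(i)\to\infty$ almost surely; in particular $\Pr\{B(i)\geq M\}\to 1$, and no proper limiting law exists because the mass of $B(i)$ escapes every compact set. If $\phi_{inc}=1$, the increments are mean-zero and non-degenerate, so by the central limit theorem $\Pr\{S(i)\geq L\}\to\tfrac12$ for every fixed $L$; hence $\liminf_i\Pr\{B(i)\geq L\}\geq\tfrac12$ for all $L$, which already precludes tightness of $\{B(i)\}$ and therefore any limiting distribution. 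To upgrade this to $\Pr\{B(i)\geq M\}\to 1$ I would use the Lyapunov function $V(b)=b$, whose drift is $\leq 0$ outside $[0,M)$, so $\{B(i)\}$ is recurrent, while the bound $B(i)\geq S(i)$ rules out positive recurrence; the chain is thus null recurrent, and for a null-recurrent chain $\Pr\{B(i)\in K\}\to 0$ for every compact $K$, in particular $\Pr\{B(i)<M\}\to 0$. In both regimes $M$ is asymptotically available whenever a NACK arrives, so the relay transmits drawing energy $M$ over phase~II of normalized duration $1/2$, giving $P_{R}(i)=M/(1/2)=2M$ almost surely.

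The main obstacle is the boundary case $\phi_{inc}=1$: there is no clean almost-sure divergence, so one must argue null recurrence rather than transience, and this requires care because \eqref{BIonoff} is not an exact random walk due to the reflecting term when $B(i)<M$. The coupling $B(i)\geq S(i)$ is exactly what allows the random-walk facts (the SLLN, the CLT) and a drift argument to be transferred to $\{B(i)\}$ and close the proof.
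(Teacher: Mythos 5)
Your proposal is correct, and it takes a genuinely different route from the paper. The paper's own proof is a one-line reduction to its IBEP instability result, whose argument is an ergodic time-average / flow-conservation comparison of $\lim_{N\to\infty}\frac1N\sum_i X(i)$ with $\lim_{N\to\infty}\frac1N\sum_i E_R(i)\mathbb{O}_{_{SD}}(i)$ (with $E_R(i)=M\bar I$ for IOFP): for $\phi_{inc}<1$ the buffer is absorbing, and the critically balanced case $\phi_{inc}=1$ is settled essentially by assertion, citing Loynes that critically balanced chains are unstable. You instead establish the pathwise domination $B(i)\geq S(i)$ by the unreflected random walk with i.i.d.\ increments $X(k)-MY(k)$ (correctly verified for the IOFP update: the silent slot with $B(i)<M$ only increases $B(i)-S(i)$ by $M$), and then import classical random-walk facts: the SLLN gives $B(i)\to\infty$ a.s.\ for $\phi_{inc}<1$, and the CLT gives non-tightness of the laws of $B(i)$ for $\phi_{inc}=1$, which already kills any proper limiting distribution; the remaining claim $\Pr\{B(i)<M\}\to0$ at the critical point is obtained from a Lyapunov/null-recurrence argument. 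This buys a more rigorous treatment of $\phi_{inc}=1$ than the paper provides, at the cost of some Harris-chain machinery: your last step tacitly needs $\psi$-irreducibility and aperiodicity of $\{B(i)\}$ and that $[0,M)$ is a small (hence petite, finite-invariant-measure) set so that the null-recurrent convergence theorem applies — all routine here because the harvested energy has a full-support exponential density, e.g.\ every $b\in[0,M)$ has one-step transition density bounded below by $e^{-W_2\Gamma^{\prime}_{th}}\lambda_1 e^{-\lambda_1 x}$ on $(M,\infty)$. The underlying drift identity ($\lambda_1^{-1}(1-\phi_{inc})$ for $b\geq M$) is the same in both proofs; the difference is that the paper argues with Ces\`aro averages and a stability citation, while you argue pathwise with a coupling plus SLLN/CLT/recurrence theory.
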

\begin{proof}
	Proof is along lines similar to that of Theorem \ref{Theorem_Limiting_PDF not exists} with R having transmit energy $E_{_{R}}(i)=M\bar{I}$, and is therefore omitted.
\end{proof}

\begin{theorem}\label{Theorem_Limiting_PDF On-Off}
 For an infinite-size PEB based on the IOFP as in \eqref{BIonoff}, the limiting distribution of infinite-size PEB exists when $\phi_{inc} >1$, and its limiting CDF $G(x)$ must satisfy following integral equation:
 \begin{align}
G(x)=
\begin{cases}
e^{-W_{2}\Gamma^{\prime}_{th}}\displaystyle\int_{u=0}^{x}F_{X}\left( x-u\right)g\left( u\right)du 
+\left( 1-e^{-W_{2}\Gamma^{\prime}_{th}}\right) \left[\int_{u=M}^{M+x}F_{X}\left( x-u+M\right)g\left( u\right)du \right . \\
\left.+\displaystyle \int_{u=0}^{x}F_{X}\left( x-u\right)g\left(u\right)du \right] \hspace{2.9 cm} 0 \leq x < M \\
e^{-W_{2}\Gamma^{\prime}_{th}}\displaystyle\int_{u=0}^{x}F_{X}\left( x-u\right)g\left( u\right)du 
+\left( 1-e^{-W_{2}\Gamma^{\prime}_{th}}\right)\left[\int_{u=M}^{M+x}F_{X}\left( x-u+M\right)g\left( u\right)du \right . \\
\left.+\displaystyle \int_{u=0}^{M}F_{X}\left( x-u\right)g\left(u\right)du \right] \hspace{2.9 cm}     x \geq M
\end{cases}
\label{limcdfintsonoff}
\end{align}
Its corresponding limiting PDF $g(x)$ is given by:
\begin{align}
g(x)=\begin{cases}
\dfrac{\left(1-e^{Qx}\right)}{M}  \hspace{4.2 cm} 0 \leq x < M \\
\dfrac{-Qe^{Qx}}{M\left(Q+\lambda_{1}\left( 1-e^{-W_{2}\Gamma^{\prime}_{th}}\right)\right) }  \hspace{2 cm} x \geq M,
\end{cases}
\label{limpdfsplifin}
\end{align}
where $Q$ is given by:
$$Q =  -\frac{\mathcal{W}\left( -\left( 1-e^{-W_{2}\Gamma^{\prime}_{th}} \right) \lambda_{1}M\exp\left( -\left( 1-e^{-W_{2}\Gamma^{\prime}_{th}}\right) \lambda_{1}M \right) \right) }{M}  
-\left( 1-e^{-W_{2}\Gamma^{\prime}_{th}} \right) \lambda_{1}. $$
\end{theorem}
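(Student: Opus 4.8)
For Theorem~\ref{Theorem_Limiting_PDF On-Off}, the plan is to reuse the three–step strategy behind Theorem~\ref{Theorem_Limiting_PDF}, adapted to the on–off recursion \eqref{BIonoff}: first establish existence of a unique stationary law, then derive the balance equation \eqref{limcdfintsonoff}, and finally solve it in closed form. For existence, the harvested–energy increment $X(i)$ has an absolutely continuous (exponential, rate $\lambda_1$) component, so the chain $\{B(i)\}$ of \eqref{BIonoff} is $\psi$-irreducible and aperiodic and $[0,M]$ is petite; with the test function $V(b)=b$ the one-step mean drift equals $\mathbb{E}\{X\}=1/\lambda_1$ on $[0,M)$, which is positive and bounded, while for $b\ge M$ it equals $\mathbb{E}\{X\}-M(1-e^{-W_{2}\Gamma^{\prime}_{th}})=\lambda_1^{-1}(1-\phi_{inc})$, which is strictly negative precisely when $\phi_{inc}>1$. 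A Foster--Lyapunov argument then yields positive Harris recurrence and hence a unique limiting distribution (the complementary case $\phi_{inc}\le1$ being Theorem~\ref{on-off not exists}).

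To obtain \eqref{limcdfintsonoff}, I would apply the total-probability formula to $\Pr\{B(i{+}1)\le x\}$, conditioning on $B(i)=u$ and on the three mutually exclusive events of \eqref{BIonoff}: an ACK, of probability $\Pr\{\gamma_{_{SD}}(i)\ge\Gamma^{\prime}_{th}\}=e^{-W_{2}\Gamma^{\prime}_{th}}$, under which $B(i{+}1)=u+X(i)$; a NACK with $u\ge M$, under which $B(i{+}1)=u-M+X(i)$; and a NACK with $u<M$, under which $B(i{+}1)=u+X(i)$. Using stationarity (the laws of $B(i{+}1)$ and $B(i)$ coincide with $G$) and the fact that the NACK-with-$u<M$ term contributes $\int_{0}^{\min(x,M)}F_X(x-u)g(u)\,du$ then reproduces \eqref{limcdfintsonoff}, the truncation $\min(x,M)$ being exactly what forces the split into the cases $0\le x<M$ and $x\ge M$.

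For the solution, substitute the exponential CDF $F_X(y)=1-e^{-\lambda_1 y}$ into \eqref{limcdfintsonoff}. On the branch $x\ge M$, multiplying by $e^{\lambda_1 x}$ and differentiating once reduces the equation to $g(x)=\lambda_1(1-e^{-W_{2}\Gamma^{\prime}_{th}})\bigl(G(x+M)-G(x)\bigr)$; positing an exponential tail $g(x)=Ce^{Qx}$ for $x\ge M$ and using $G(x+M)-G(x)=\int_x^{x+M}g$ turns this into the transcendental relation $Q+\lambda_1(1-e^{-W_{2}\Gamma^{\prime}_{th}})=\lambda_1(1-e^{-W_{2}\Gamma^{\prime}_{th}})\,e^{QM}$. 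Writing $a=\phi_{inc}=\lambda_1 M(1-e^{-W_{2}\Gamma^{\prime}_{th}})$ and $s=QM+a$, this is $se^{-s}=ae^{-a}$, whose nontrivial principal-branch root is $s=-\mathcal{W}(-ae^{-a})$, i.e. $QM=-a-\mathcal{W}(-ae^{-a})$, which is precisely the stated $Q$ (the root $Q=0$, i.e. $s=a$, is rejected as non-normalizable). On the branch $0\le x<M$, the two contributions whose post-jump value is $u+X(i)$ combine so that $G(x)$ cancels from both sides, and one differentiation of what remains leaves $g(x)=\lambda_1(1-e^{-W_{2}\Gamma^{\prime}_{th}})\bigl(G(x+M)-G(M)\bigr)$; inserting the exponential tail yields $g(x)=\tfrac{\lambda_1(1-e^{-W_{2}\Gamma^{\prime}_{th}})Ce^{QM}}{Q}\bigl(e^{Qx}-1\bigr)$ for $0\le x<M$. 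Finally, the normalization $\int_0^{M}g+\int_M^{\infty}g=1$ (equivalently $G(\infty)=1$) fixes $C=-Q/\bigl(M(Q+\lambda_1(1-e^{-W_{2}\Gamma^{\prime}_{th}}))\bigr)$ and collapses the $[0,M)$ expression to $(1-e^{Qx})/M$, which is \eqref{limpdfsplifin}; since $\phi_{inc}>1$ forces $-ae^{-a}\in(-e^{-1},0)$ and $\mathcal{W}(-ae^{-a})\in(-1,0)$, one has $Q<0$ and $Q+\lambda_1(1-e^{-W_{2}\Gamma^{\prime}_{th}})>0$, so \eqref{limpdfsplifin} is a bona fide density and, by the uniqueness from the first step, it is the limiting PDF.

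The step I expect to be the main obstacle is the last one: unlike the IBEP case, where \eqref{limcdf} solves a single convolution-type renewal equation and comes out purely exponential, the shift term $G(x+M)$ in \eqref{limcdfintsonoff} couples the unknown across the regions $[0,M)$ and $[M,\infty)$, so the equation cannot be solved region-by-region. The key observation is that a single differentiation decouples the two branches into a solvable hierarchy---first the pure-exponential tail on $[M,\infty)$, then $g$ on $[0,M)$ expressed through that tail---and that the positive-recurrence conclusion of the first step certifies that this explicit piecewise function is \emph{the} unique stationary density, which is what legitimizes discarding the spurious characteristic roots (including $Q=0$).
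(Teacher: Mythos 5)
Your proposal is correct and follows the same overall route as the paper's Appendix C: total probability plus stationarity to get \eqref{limcdfintsonoff}, an exponential ansatz for the tail leading to the transcendental relation $\lambda_1(1-e^{-W_2\Gamma'_{th}})e^{QM}=Q+\lambda_1(1-e^{-W_2\Gamma'_{th}})$ (the paper's \eqref{onoffeqcond}), rejection of the $Q=0$ root, solution of the $[0,M)$ piece in terms of the tail, and normalization to fix the constant, with $Q$ expressed via the Lambert-W function. Two aspects differ in the details and are worth noting. First, on $[0,M)$ the paper keeps the convolution term, obtains a Volterra integral equation of the second kind with forcing $r(x)$, and invokes the resolvent formula $g_1(x)=r(x)+\lambda_1\int_0^x r(t)\,dt$ from the cited reference; you instead multiply by $e^{\lambda_1 x}$ and differentiate, which (as one can check) yields the clean identities $g(x)=\lambda_1(1-e^{-W_2\Gamma'_{th}})\bigl(G(x+M)-G(x)\bigr)$ for $x\ge M$ and $g(x)=\lambda_1(1-e^{-W_2\Gamma'_{th}})\bigl(G(x+M)-G(M)\bigr)$ for $0\le x<M$, giving $g_1$ directly from the tail without the Volterra machinery — a somewhat more self-contained derivation of the same formulas. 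Second, you prepend a Foster--Lyapunov/Harris-recurrence argument for existence and uniqueness of the stationary law when $\phi_{inc}>1$; the paper does not do this (it simply assumes the steady state is reached and notes a posteriori that $Q<0$ is needed), so your addition strengthens the existence claim and also legitimizes discarding the spurious characteristic root, though as written it is a sketch (irreducibility, aperiodicity and the petite-set property of $[0,M]$ are asserted rather than proved). Both routes land on exactly \eqref{limpdfsplifin} with the stated $Q$, and your sign analysis ($Q<0$, $Q+\lambda_1(1-e^{-W_2\Gamma'_{th}})>0$ when $\phi_{inc}>1$) is consistent with the paper's existence condition.
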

For  existence of the limiting distribution, $Q$ should be negative (i.e. $Q < 0$).
\begin{proof}
	Refer to Appendix \ref{Appendix_Limiting_PDF On-Off}. 
\end{proof}

\section{Outage and Throughput Analysis}
In this section, expressions are derived for outage and throughput  with the HSU and HU architectures.	
\subsection{HSU Architecture with IBEP}
 With incremental relaying, the link is  not in outage when $\gamma_{_{SD}}(i)\geq \Gamma^{\prime}_{th}$. When $\gamma_{_{SD}}(i)< \Gamma^{\prime}_{th}$, the link is once again not in outage when both $\gamma_{_{SR}}(i)$ and $\gamma_{_{SD}}(i)+\gamma_{_{RD}}(i)$ are greater than or equal to $\Gamma_{th}$.  The outage probability $P_{out}^{HSU}$ for incremental relaying is given by \cite{S.S.Ikki}:
\begin{align}
P_{out}^{HSU}=\Pr\left\lbrace \gamma_{_{SD}}(i)<\Gamma^{\prime}_{th}\}\Pr\{\min\left( \gamma_{_{SR}}(i),\gamma_{_{RD}}(i)+\gamma_{_{SD}}(i)\right) < \Gamma_{th}\mid(\gamma_{_{SD}}(i)<\Gamma^{\prime}_{th})\right\rbrace,
\label{outin}
\end{align}
where $\gamma_{_{SR}}(i)$, $\gamma_{_{SD}}(i)$ are given by \eqref{link_snr} and $\gamma_{_{RD}}(i)=P_{_{R}}(i)|h_{_{RD}}(i)|^{2}/\sigma^{2}$ .\\
The throughput expression $\tau^{HSU}$ for incremental relaying with fixed DF protocol is given by:
\begin{align} 
\tau^{HSU}= & \hspace{0.1 cm} R_{0}\Pr\left\lbrace \gamma_{_{SD}}\left( i\right) \geq \Gamma^{\prime}_{th}\right\rbrace \nonumber
+\frac {R_{0}}{2}\Pr\left\lbrace \gamma_{_{SD}}\left( i\right) <\Gamma^{\prime}_{th}\right\rbrace \nn \\
&\times \Pr\left\lbrace \min\left( \gamma_{_{SR}}\left( i\right) ,\gamma_{_{RD}}\left( i\right) +\gamma_{_{SD}}\left( i\right)\right) \geq \Gamma_{th}\mid \left( \gamma_{_{SD}}(i)<\Gamma^{\prime}_{th}\right) \right\rbrace .
\label{ta}
\end{align}
In \eqref{ta}, the last term can be rewritten as:
\begin{align}
&\Pr\left\lbrace \gamma_{_{SD}}\left( i\right) <\Gamma^{\prime}_{th}\right\rbrace 
\Pr\left\lbrace \min\left( \gamma_{_{SR}}\left( i\right) ,\gamma_{_{RD}}\left( i\right) +\gamma_{_{SD}}\left( i\right) \right) \geq \Gamma_{th} \mid\left( \gamma_{_{SD}}(i)<\Gamma^{\prime}_{th}\right) \right\rbrace \nn \\
&=\Pr\left\lbrace \min\left( \gamma_{_{SR}}\left( i\right) ,\gamma_{_{RD}}\left( i\right) +\gamma_{_{SD}}\left( i\right)\right) \geq \Gamma_{th} ,  \gamma_{_{SD}}(i) < \Gamma^{\prime}_{th} \right\rbrace \nn \\
&\overset{\ell}{=}\Pr\left\lbrace \gamma_{_{SD}}\left( i\right) <\Gamma^{\prime}_{th}\right\rbrace-\Pr\left\lbrace \min\left( \gamma_{_{SR}}\left( i\right) ,\gamma_{_{RD}}\left( i\right) +\gamma_{_{SD}}\left( i\right)\right) < \Gamma_{th} ,  \gamma_{_{SD}}(i)<\Gamma^{\prime}_{th} \right\rbrace \nn \\
&=\Pr\left\lbrace \gamma_{_{SD}}\left( i\right) <\Gamma^{\prime}_{th}\right\rbrace-\Pr\left\lbrace \gamma_{_{SD}}(i)<\Gamma^{\prime}_{th}\right\rbrace  \times \nn \\
& \hspace{2 cm }\Pr\left\lbrace \min(\gamma_{_{SR}}(i),\gamma_{_{RD}}(i)+\gamma_{_{SD}}(i))< \Gamma_{th}\mid(\gamma_{_{SD}}(i)<\Gamma^{\prime}_{th})\right\rbrace \nn \\
&\overset{n}{=}\Pr\left\lbrace \gamma_{_{SD}}\left( i\right) <\Gamma^{\prime}_{th}\right\rbrace-P_{out}^{HSU}=\Pr\left\lbrace \gamma_{_{SD}}(i)<\Gamma^{\prime}_{th}\right\rbrace \bigg[1-\frac {P_{out}^{HSU}}{\Pr\{\gamma_{_{SD}}(i)<\Gamma^{\prime}_{th}\}}\bigg].
\label{outs} 
\end{align}
In the above, equality $\ell$ follows from the fact that for two events $A$ and $B$, $\Pr\{A,B\}+\Pr\{\overline{A},B\}=\Pr\{B\}$, where $\overline{A}$ denotes the complement of $A$.  Equation $n$ follows from (\ref{outin}). Using \eqref{outs} in \eqref{ta} gives:
\begin{align}
\tau^{HSU} =R_{0}\Pr\{\gamma_{_{SD}}(i)\geq\Gamma^{\prime}_{th}\}+\frac {R_{0}}{2}\Pr\left\lbrace \gamma_{_{SD}}(i)<\Gamma^{\prime}_{th}\right\rbrace \bigg[1-\frac {P_{out}^{HSU}}{\Pr\{\gamma_{_{SD}}(i)<\Gamma^{\prime}_{th}\}}\bigg] ,\label{throughput}
\end{align}
where  $\Pr\left\lbrace \gamma_{_{SD}}(i)\geq\Gamma^{\prime}_{th}\right\rbrace =e^{-W_{2}\Gamma^{\prime}_{th}}. $ In the following theorem, we present expressions for outage $P_{out}^{HSU}$ using the fact that $P_R(i)$ is given by:
\begin{align}
{P_{_{R}}(i)}=
\begin{cases}
\dfrac {\min\left( B(i),M\right) }{(1/2)}=2\min\left( B(i),M\right) \hspace{1cm}  \hspace{1cm} \phi_{inc} > 1.\\
\dfrac{M}{(1/2)}=2M \hspace{6 cm} \phi_{inc} \leq 1
\end{cases}
\label{reltrb}
\end{align}
\begin{theorem}
For an infinite-size PEB based on the IBEP, the outage probability  $P_{out}^{HSU}$ with HSU architecture when $\phi_{inc} > 1$ and $\phi_{inc} \leq 1$  are given by:

\begin{align} 
P_{out}^{HSU} =&\hspace{0.1 cm} (1-e^{-W_{4}\Gamma_{th}})\left( 1-e^{-W_{2}\Gamma^{\prime}_{th}}\right)
+e^{-W_{4}\Gamma_{th}} \nn \\
 &\times\Bigg[e^{-Z_{1}M}\left\lbrace \left( 1-e^{-W_{2}\Gamma_{th}}\right)
+\frac{W_{2}\left[ e^{-W_{1}\Gamma_{th}}-e^{-W_{2}\Gamma_{th}}\right] }{(W_{1}-W_{2})}\right\rbrace \nn \\ &+(1-e^{-Z_{1}M})\left( 1-e^{-W_{2}\Gamma_{th}}\right) 
+I_{HSU}\Bigg] \hspace{0.5cm}  \hspace{0.5cm} \phi_{inc} > 1
\label{hsuoutag1}
\end{align} 
where 
\begin{align}
I_{HSU}=W_{2}Z_{1}\int_{x=0}^{M}\frac{xe^{-Z_{1}x}[e^{-W_{2}\Gamma_{th}}
-e^{-\frac{W_{3}\Gamma_{th}}{x}}] }{(xW_{2}-W_{3})}dx.
\label{integ}
\end{align}
\begin{align} 
 P_{out}^{HSU} =& \hspace{0.1 cm} (1-e^{-W_{4}\Gamma_{th}})\left( 1-e^{-W_{2}\Gamma^{\prime}_{th}}\right) \nn \\ 
&+e^{-W_{4}\Gamma_{th}} 
\left\lbrace \left( 1-e^{-W_{2}\Gamma_{th}}\right)
+\frac{W_{2}\left[ e^{-W_{1}\Gamma_{th}}-e^{-W_{2}\Gamma_{th}}\right] }{(W_{1}-W_{2})}\right\rbrace \hspace{0.8 cm} \phi_{inc} \leq 1
\label{hsuoutage2}
\end{align} 
\end{theorem}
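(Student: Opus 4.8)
The plan is to reduce the outage event in \eqref{outin} to the CDF of the sum $\gamma_{_{RD}}(i)+\gamma_{_{SD}}(i)$ evaluated at $\Gamma_{th}$, and then to compute that CDF by conditioning on the steady-state buffer energy.

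First I would separate out the S--R link. Since $h_{_{SR}}(i)$ is independent of $h_{_{SD}}(i)$ and $h_{_{RD}}(i)$, and $\gamma_{_{SR}}(i)$ is exponential with $\Pr\{\gamma_{_{SR}}(i)\ge\Gamma_{th}\}=e^{-W_{4}\Gamma_{th}}$, the conditional probability in \eqref{outin} equals $1-e^{-W_{4}\Gamma_{th}}\Pr\{\gamma_{_{RD}}(i)+\gamma_{_{SD}}(i)\ge\Gamma_{th}\mid\gamma_{_{SD}}(i)<\Gamma^{\prime}_{th}\}$. Applying the identity $\Pr\{A,B\}+\Pr\{\overline A,B\}=\Pr\{B\}$ (as used in \eqref{outs}) together with $\Pr\{\gamma_{_{SD}}(i)<\Gamma^{\prime}_{th}\}=1-e^{-W_{2}\Gamma^{\prime}_{th}}$ gives $P_{out}^{HSU}=(1-e^{-W_{4}\Gamma_{th}})(1-e^{-W_{2}\Gamma^{\prime}_{th}})+e^{-W_{4}\Gamma_{th}}\Pr\{\gamma_{_{RD}}(i)+\gamma_{_{SD}}(i)<\Gamma_{th},\ \gamma_{_{SD}}(i)<\Gamma^{\prime}_{th}\}$. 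The key observation is that $\Gamma_{th}\le\Gamma^{\prime}_{th}$ and $\gamma_{_{RD}}(i)\ge 0$, so $\{\gamma_{_{RD}}(i)+\gamma_{_{SD}}(i)<\Gamma_{th}\}$ already forces $\{\gamma_{_{SD}}(i)<\Gamma^{\prime}_{th}\}$; the joint probability thus collapses to $\Pr\{\gamma_{_{RD}}(i)+\gamma_{_{SD}}(i)<\Gamma_{th}\}$, which is precisely the bracketed term in \eqref{hsuoutag1}.

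Next, for $\phi_{inc}>1$, I would evaluate $\Pr\{\gamma_{_{RD}}(i)+\gamma_{_{SD}}(i)<\Gamma_{th}\}$ by conditioning on $B(i)$, whose limiting PDF is $g(x)=Z_{1}e^{-Z_{1}x}$ (differentiating \eqref{limcdf} of Theorem~\ref{Theorem_Limiting_PDF}) and which is independent of the current channel gains. Given $B(i)=x$, \eqref{reltrb} gives $P_{_{R}}(i)=2\min(x,M)$, so by the definitions of $W_{1},W_{3}$ the variable $\gamma_{_{RD}}(i)$ is conditionally exponential with rate $W_{1}$ when $x\ge M$ and with rate $W_{3}/x$ when $x<M$, independent of the exponential $\gamma_{_{SD}}(i)$ of rate $W_{2}$. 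Using the elementary CDF of a sum of two independent exponentials, the mass $e^{-Z_{1}M}$ on $\{x\ge M\}$ contributes the first brace of \eqref{hsuoutag1}; for $x<M$ I would rewrite the conditional CDF as $(1-e^{-W_{2}\Gamma_{th}})+xW_{2}(e^{-W_{2}\Gamma_{th}}-e^{-W_{3}\Gamma_{th}/x})/(xW_{2}-W_{3})$ and integrate against $g(x)$ on $[0,M]$: the constant term yields $(1-e^{-Z_{1}M})(1-e^{-W_{2}\Gamma_{th}})$ and the remainder yields exactly $I_{HSU}$ in \eqref{integ}. Collecting the three pieces and substituting into the identity of the previous paragraph gives \eqref{hsuoutag1}.

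Finally, for $\phi_{inc}\le 1$, Theorem~\ref{Theorem_Limiting_PDF not exists} and \eqref{reltrb} give $P_{_{R}}(i)=2M$ almost surely, so $\gamma_{_{RD}}(i)$ is exponential with rate $W_{1}$ with no buffer averaging, and the same sum-of-exponentials CDF immediately produces the brace in \eqref{hsuoutage2}; substituting into the first-paragraph identity gives \eqref{hsuoutage2}, and the non-incremental expressions follow by setting $\Gamma^{\prime}_{th}=\infty$. I expect the only real obstacle to be the bookkeeping in the third paragraph: identifying the conditional rate $W_{3}/x$ of $\gamma_{_{RD}}(i)$ correctly when the buffer lies below $M$, and performing the algebraic split of the $x<M$ conditional CDF so that the leftover integral reproduces \eqref{integ} verbatim rather than merely up to rearrangement.
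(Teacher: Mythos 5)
Your proposal is correct and takes essentially the same route as the paper's Appendix D: the same decomposition isolating $\gamma_{_{SR}}(i)$ so that the outage reduces to $(1-e^{-W_{4}\Gamma_{th}})(1-e^{-W_{2}\Gamma^{\prime}_{th}})+e^{-W_{4}\Gamma_{th}}\Pr\{\gamma_{_{RD}}(i)+\gamma_{_{SD}}(i)<\Gamma_{th}\}$, followed by the same split on $B(i)\geq M$ versus $B(i)<M$ with the limiting PDF $Z_{1}e^{-Z_{1}x}$ and the two-exponential-sum CDF, which reproduces the braces, the term $(1-e^{-Z_{1}M})(1-e^{-W_{2}\Gamma_{th}})$, the integral $I_{HSU}$, and the $\phi_{inc}\leq 1$ case with $P_{_{R}}(i)=2M$. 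The only cosmetic difference is that you derive the Ikki-type decomposition (including the inclusion $\{\gamma_{_{RD}}+\gamma_{_{SD}}<\Gamma_{th}\}\subseteq\{\gamma_{_{SD}}<\Gamma^{\prime}_{th}\}$) from first principles, whereas the paper simply cites the reference for it.
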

\begin{proof}
	Refer to Appendix \ref{outage HSU arch}.
	\end{proof}
The integral $I_{HSU}$ has no closed-form expression, but is bounded,  and can be evaluated numerically. 
 The throughput $\tau^{HSU}$ is given by (\ref{throughput}), 
where $P_{out}^{HSU}$ is given by \eqref{hsuoutag1} and \eqref{hsuoutage2} for $\phi_{inc} > 1$ and $\phi_{inc} \leq 1$. As noted already, expressions for the non-incremental BEP (NIBEP) follow with $\Gamma_{th}^{\prime}=\infty$.
\subsection{HSU Architecture with IOFP}
With IOFP, the relay transmit power $P_R(i)$ is given by:
 \begin{align}
P_{_{R}}\left( i\right)=
\begin{cases}
\dfrac{M\bar{I}}{\left( 1/2\right) }=2M\bar{I}    \hspace{1.2 cm }  \phi_{inc} > 1\\
\dfrac{M}{\left( 1/2\right) }=2M   \hspace{1.5 cm } \phi_{inc} \leq 1.
\end{cases} \hspace{1 cm} 
\label{onofftra}
\end{align}
\begin{theorem} For an infinite-size buffer based on the IOFP, the outage probability $P_{out}^{HSU}$ when $\phi_{inc} > 1$  is given by:
\begin{align} 
P_{out}^{HSU} =& \hspace{0.1 cm} \left( 1-e^{-W_{4}\Gamma_{th}}\right)  \left( 1-e^{-W_{2}\Gamma^{\prime}_{th}}\right) 
+e^{-W_{4}\Gamma_{th}} \nn \\
&\times\Bigg[\frac{1}{\phi_{inc}}\left\lbrace \left( 1-e^{-W_{2}\Gamma_{th}}\right)
+\frac{W_{2}\left[ e^{-W_{1}\Gamma_{th}}-e^{-W_{2}\Gamma_{th}}\right] }{(W_{1}-W_{2})}\right\rbrace \nn \\ &+\left( 1-\frac{1}{\phi_{inc}}\right) \left( 1-e^{-W_{2}\Gamma_{th}}\right)  
\Bigg]   \hspace{1 cm}  \hspace{3.2 cm} \phi_{inc} > 1.
\label{hsuonoffoutag1}
\end{align} 
Using Theorem \ref{on-off not exists}, when $\phi_{inc} \leq 1$,  the expression for outage probability  is as in \eqref{hsuoutage2}.
\end{theorem}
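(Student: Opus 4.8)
The plan is to reduce the theorem to elementary exponential-tail calculations once two facts are in hand: that in the stationary regime $\Pr\{B(i)\ge M\}=1/\phi_{inc}$, and that the indicator $\bar{I}$ is independent of the current-slot fading gains. First I would rewrite the generic outage expression \eqref{outin}. Since $\{\min(\gamma_{_{SR}}(i),\gamma_{_{RD}}(i)+\gamma_{_{SD}}(i))<\Gamma_{th}\}$ is the complement of $\{\gamma_{_{SR}}(i)\ge\Gamma_{th}\}\cap\{\gamma_{_{RD}}(i)+\gamma_{_{SD}}(i)\ge\Gamma_{th}\}$, and $\gamma_{_{SR}}(i)$ is independent of the remaining variables with $\Pr\{\gamma_{_{SR}}(i)\ge x\}=e^{-W_{4}x}$ while $\Pr\{\gamma_{_{SD}}(i)<\Gamma^{\prime}_{th}\}=1-e^{-W_{2}\Gamma^{\prime}_{th}}$, the conditioning in \eqref{outin} cancels and one obtains
\begin{align}
P_{out}^{HSU}=\left(1-e^{-W_{4}\Gamma_{th}}\right)\left(1-e^{-W_{2}\Gamma^{\prime}_{th}}\right)+e^{-W_{4}\Gamma_{th}}\,\Pr\left\{\gamma_{_{SD}}(i)<\Gamma^{\prime}_{th},\ \gamma_{_{RD}}(i)+\gamma_{_{SD}}(i)<\Gamma_{th}\right\}. \nonumber
\end{align}
The first term accounts for an $S$--$R$ decoding failure after a NACK; only the joint probability in the second term must be evaluated, where $\gamma_{_{RD}}(i)=P_{_{R}}(i)|h_{_{RD}}(i)|^{2}/\sigma^{2}$ with $P_{_{R}}(i)=2M\bar{I}$ as in \eqref{onofftra}.

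Next I would pin down the law of $\bar{I}$. Because $B(i)$ is a deterministic function of $\{X(j)\}_{j<i}$ and $\{\gamma_{_{SD}}(j)\}_{j<i}$ only and the fading is i.i.d.\ across slots, $\bar{I}$ is independent of $\bigl(h_{_{SD}}(i),h_{_{SR}}(i),h_{_{RD}}(i)\bigr)$, so in particular conditioning on $\gamma_{_{SD}}(i)<\Gamma^{\prime}_{th}$ leaves its distribution unchanged. For the value $\Pr\{\bar{I}=1\}=\Pr\{B(i)\ge M\}$ I would use an energy-balance (drift) argument: from \eqref{BIonoff}, $B(i+1)-B(i)=X(i)-M\bar{I}\,\mathbf{1}\{\gamma_{_{SD}}(i)<\Gamma^{\prime}_{th}\}$; in the stationary regime guaranteed by Theorem~\ref{Theorem_Limiting_PDF On-Off} (where the tail of $g$ in \eqref{limpdfsplifin} decays exponentially, so $\mathbb{E}\{B(i)\}<\infty$) one has $\mathbb{E}\{B(i+1)-B(i)\}=0$, and taking expectations using $\mathbb{E}\{X(i)\}=1/\lambda_{1}$ and the above independence gives $1/\lambda_{1}=M\Pr\{B(i)\ge M\}\left(1-e^{-W_{2}\Gamma^{\prime}_{th}}\right)$, i.e.\ $\Pr\{B(i)\ge M\}=1/\phi_{inc}$ by \eqref{phinc}. (The same value also follows by integrating $g(x)$ of \eqref{limpdfsplifin} over $[M,\infty)$ and simplifying with the defining relation of the Lambert-$\mathcal{W}$ term appearing in $Q$.)

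Finally I would split the remaining joint probability on $\bar{I}$. If $\bar{I}=0$ the relay is silent, $\gamma_{_{RD}}(i)=0$, and since $\Gamma_{th}\le\Gamma^{\prime}_{th}$ the event collapses to $\{\gamma_{_{SD}}(i)<\Gamma_{th}\}$, of probability $1-e^{-W_{2}\Gamma_{th}}$. If $\bar{I}=1$ then $P_{_{R}}(i)=2M$, so $\gamma_{_{RD}}(i)$ is exponential with mean $1/W_{1}$ and independent of $\gamma_{_{SD}}(i)$; since $\gamma_{_{RD}}(i)+\gamma_{_{SD}}(i)<\Gamma_{th}$ already forces $\gamma_{_{SD}}(i)<\Gamma_{th}\le\Gamma^{\prime}_{th}$, the event becomes $\{\gamma_{_{RD}}(i)+\gamma_{_{SD}}(i)<\Gamma_{th}\}$, whose probability is the routine convolution
\begin{align}
\int_{0}^{\Gamma_{th}}W_{2}e^{-W_{2}y}\left(1-e^{-W_{1}(\Gamma_{th}-y)}\right)dy=\left(1-e^{-W_{2}\Gamma_{th}}\right)+\frac{W_{2}\left(e^{-W_{1}\Gamma_{th}}-e^{-W_{2}\Gamma_{th}}\right)}{W_{1}-W_{2}}. \nonumber
\end{align}
Weighting the two cases by $\Pr\{\bar{I}=1\}=1/\phi_{inc}$ and $\Pr\{\bar{I}=0\}=1-1/\phi_{inc}$ and substituting back into the displayed expression above yields \eqref{hsuonoffoutag1}; the claim for $\phi_{inc}\le1$ is then immediate from Theorem~\ref{on-off not exists}, which forces $\bar{I}=1$ almost surely, so that setting $1/\phi_{inc}=1$ recovers \eqref{hsuoutage2}.

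I expect the only genuine obstacle to be the rigorous justification of $\Pr\{B(i)\ge M\}=1/\phi_{inc}$ together with the independence of $\bar{I}$ from the current fading; everything afterwards is exponential-tail bookkeeping plus the single convolution integral above. A secondary point needing care is treating incremental and non-incremental signalling uniformly, which rests on the observation $\Gamma_{th}\le\Gamma^{\prime}_{th}$, so that the side constraint $\gamma_{_{SD}}(i)<\Gamma^{\prime}_{th}$ is automatically implied whenever $\gamma_{_{RD}}(i)+\gamma_{_{SD}}(i)<\Gamma_{th}$.
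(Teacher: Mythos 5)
Your proposal is correct and follows essentially the same route as the paper's Appendix~\ref{outage on-off}: the same decomposition of \eqref{outin} into the $S$--$R$ failure term and the MRC term, the same split on $\bar{I}$ (relay silent versus $P_{_{R}}(i)=2M$) using the independence of $B(i)$ from the current-slot fading, the same exponential convolution, and the key fact $\Pr\{B(i)\geq M\}=1/\phi_{inc}$. The only difference is cosmetic: you obtain $\Pr\{B(i)\geq M\}=1/\phi_{inc}$ primarily via a stationarity/flow-balance argument (valid here since $Q<0$ gives a finite-mean limiting distribution), whereas the paper integrates $g_{2}(x)$ over $[M,\infty)$ and simplifies with \eqref{onoffeqcond} — a route you also note parenthetically.
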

\begin{proof}
	Refer to Appendix \ref{outage on-off}. 
\end{proof}

The throughput $\tau^{HSU}$ is given by (\ref{throughput}), 
where $P_{out}^{HSU}$ is given in \eqref{hsuonoffoutag1} and \eqref{hsuoutage2} for $\phi_{inc} > 1$ and $\phi_{inc} \leq 1$. As noted already, expressions for the non-incremental OFP (NIOFP) follow with $\Gamma_{th}^{\prime}=\infty$.
\subsection{HU Architecture with Incremental Relaying}
In HU Incremental (HU Inc), there is no energy buffer at the relay node. In the first phase, the  relay stores harvested energy in a super-capacitor.  In the second phase, the relay transmits to the destination with all the harvested energy if NACK is received from destination. If ACK is received from destination, the relay remains silent (the harvested energy cannot be used in future time slots due to inability of the super-capacitor to store energy over long intervals).
 With the HU architecture  the relay transmit power is given by:$
P_{_{R}}(i)=\frac {X(i)}{\left(1/2\right) }(1/2)=X(i)$.
Substituting  $P_{_{R}}(i)$ into  \eqref{outin},   the outage probability with HU architecture can be written as:
\begin{align}
P_{out}^{HU}
=& \hspace{0.1 cm}(1-e^{-W_{4}\Gamma_{th}})\left( 1-e^{-W_{2}\Gamma^{\prime}_{th}}\right)
+e^{-W_{4}\Gamma_{th}}\left[  \lambda_{1}W_{2}I_{HU}+\left( 1-e^{-W_{2}\Gamma_{th}}\right)\right] ,
\label{huouta}
\end{align}
where 
$$I_{HU}=\int_{x=0}^{\infty}\frac{xe^{-\lambda_{1}x}[e^{-W_{2}\Gamma_{th}}-e^{-\frac{2W_{3}\Gamma_{th}}{x}}]}{(xW_{2}-2W_{3})}dx. $$
	Proof is omitted due to paucity of space. The integral $I_{HU}$ has no closed form expression, but is  bounded,  and can be evaluated numerically.\\    
The throughput  $\tau^{HU}$ with HU architecture is given by  (\ref{throughput}) with $P_{out}^{HU}$ replacing $P_{out}^{HSU}$:
\begin{align}
\tau^{HU} =& \hspace{0.1 cm} R_{0}\Pr\left\lbrace \gamma_{_{SD}}(i)\geq\Gamma^{\prime}_{th}\right\rbrace +\frac {R_{0}}{2}\Pr\left\lbrace \gamma_{_{SD}}(i)<\Gamma^{\prime}_{th}\right\rbrace \left[ 1-\frac {P_{out}^{HU}}{\Pr\{\gamma_{_{SD}}(i)<\Gamma^{\prime}_{th}\}}\right] ,
\end{align}
where $P_{out}^{HU}$ is given by \eqref{huouta}.
\subsection{Direct Transmission}
 In this scheme
 S directly transmits to D. The outage $P_{out}^{DT}$ and throughput $\tau^{DT}$  are given by: $
P_{out}^{DT}=\Pr(\gamma_{_{SD}}(i)<\Gamma_{th})=(1-\exp(-W_{2}\Gamma_{th}))$ and
$\tau^{DT}=R_{0}(1-P_{out}^{DT})=R_{0}\exp(-W_{2}\Gamma_{th})$.
\section{Diversity Analysis for IBEP and IOFP}
In this section we present the diversity order analysis for both IBEP and IOFP. For this reason, we use $\Gamma^{\prime}_{th}=\Gamma_{th}$ throughout this section.
\subsection{IBEP}

To achieve maximum throughput for stable buffers we need to use $\phi_{inc}$ greater than unity (but close to it).  This implies that
$M > \dfrac{\mathbb{E}{\{X(i)}\}}{\left( 1-e^{-W_{2}\Gamma_{th}}\right) }$.  Clearly, when SNR $ \triangleq \dfrac{P_{_{S}}}{\sigma^2} \rightarrow \infty,  W_{2} \rightarrow 0 ,\hspace{0.2 cm} \text{and\ } W_{4} \rightarrow 0.$ Note that when $W_{2} \rightarrow 0$,  $M \rightarrow \infty$, which implies that $ W_{1} \rightarrow 0 \hspace{0.2 cm} \text{and} \hspace{0.2 cm} Z_{1} \rightarrow 0 . $ Although $Z_{1}$ and $M$ are dependent on SNR, their product $Z_{1}M$ is independent of  SNR  ($Z_{1}M=\mathcal{W}(-\phi_{inc}\exp(-\phi_{inc}))+\phi_{inc}$ is constant). This is a crucial observation, and will be used to establish the diversity order of IBEP starting with \eqref{hsuoutag1}.  \\
We first note that the terms $e^{-Z_{1}M}$ and $\left( 1-e^{-Z_{1}M}\right) $ in  \eqref{hsuoutag1} are independent of SNR.
 We know that  $e^{-x} \approx 1-x$ for small $x$,  we get:
\begin{align}
e^{-W_{2}\Gamma_{th}} \approx \left(1-W_{2}\Gamma_{th} \right) , \hspace{0.2 cm} e^{-W_{4}\Gamma_{th}} \approx \left(1-W_{4}\Gamma_{th} \right) , \hspace{0.2 cm} e^{-W_{1}\Gamma_{th}} \approx \left(1-W_{1}\Gamma_{th} \right) .
\label{approximaatsnr}
\end{align}
At high SNR, using \eqref{approximaatsnr},  the second term present  inside the curly brackets of \eqref{hsuoutag1} can be approximated as follows:
\begin{align}
\dfrac{W_{2}\left[ e^{-W_{1}\Gamma_{th}}-e^{-W_{2}\Gamma_{th}}\right] }{(W_{1}-W_{2})} \approx \dfrac{W_{2}\left[ \left( 1-W_{1}\Gamma_{th}\right) -\left( 1-W_{2}\Gamma_{th}\right) \right] }{(W_{1}-W_{2})} \approx -W_2\Gamma_{th}
\label{fistappr}
\end{align}
The product $W_{3}\Gamma_{th}$ is independent of SNR. Furhter, since noise variance is small, the product $W_{3}\Gamma_{th} < 1$ for any given target rate $R_0$. Let $a_{0}=W_3 \Gamma_{th}+\Delta, \text{where $\Delta > 0$ such that $\Delta \rightarrow 0$}$.  Using this, the integral \eqref{integ} can be simplified as follows:
\begin{align}
I_{HSU}&= W_{2}\int_{x=0}^{M}\frac{xZ_{1}e^{-Z_{1}x}[e^{-W_{2}\Gamma_{th}}
	-e^{-\frac{W_{3}\Gamma_{th}}{x}}] }{(xW_{2}-W_{3})}dx \nn \\
&=\underbrace{W_{2}\int_{x=0}^{a_{0}}\frac{xZ_{1}e^{-Z_{1}x}[e^{-W_{2}\Gamma_{th}}
	-e^{-\frac{W_{3}\Gamma_{th}}{x}}] }{(xW_{2}-W_{3})}dx}_{I_{1}}+\underbrace{W_{2}\int_{x=a_{0}}^{M}\frac{xZ_{1}e^{-Z_{1}x}[e^{-W_{2}\Gamma_{th}}
	-e^{-\frac{W_{3}\Gamma_{th}}{x}}] }{(xW_{2}-W_{3})}dx}_{I_{2}} \nn \\
& \overset{t}{\approx }   W_{2}\int_{x=a_{0}}^{M}\frac{xZ_{1}e^{-Z_{1}x}[e^{-W_{2}\Gamma_{th}}
	-e^{-\frac{W_{3}\Gamma_{th}}{x}}] }{(xW_{2}-W_{3})}dx \nn \\
& \approx  W_{2} \int_{x=a_{0}}^{M}\frac{xZ_{1}e^{-Z_{1}x}\left[ \left( 1-W_{2}\Gamma_{th}\right) 
-\left( 1-\frac{W_{3}\Gamma_{th}}{x}\right) \right]  }{(xW_{2}-W_{3})}dx 
=- \Gamma_{th}W_{2} \int_{x=a_{0}}^{M} Z_{1}e^{-Z_{1}x}  \nn \\
&= - \Gamma_{th}W_{2} \left[\left(1-e^{-Z_{1}M} \right)- \left(1-e^{-Z_{1}a_{0}} \right)
 \right] 
 \approx  - \Gamma_{th}W_{2} \left(1-e^{-Z_{1}M} \right).
 \label{approxima}
\end{align}
In the above, approximation $t$ results because $I_{1}$ is  insignificant when SNR is high (this is  due to the fact that the upper limit in $I_{1}$ is $a_{0},$  which is small, and a constant  independent of SNR).  On the other hand, the upper limit in $I_{2}$ is $M,$ which increases with SNR.  The last approximation in \eqref{approxima} is due to the fact that at high SNR: $Z_{1} \rightarrow 0$ and $a_{0} < 1$,  making the term $\left(1-e^{-Z_{1}a_{0}} \right) \rightarrow 0$. Using \eqref{approximaatsnr}, \eqref{fistappr} and \eqref{approxima} the outage probability expression given in \eqref{hsuoutag1}  can be approximated as follows:
\begin{align} 
P_{out}^{HSU}  \approx &\hspace{0.2 cm} W_{4}\Gamma_{th}W_{2}\Gamma_{th}
=K_{0}\left( \frac{1}{\text{SNR}}\right) ^{2} \hspace{0.8cm} \phi_{inc} > 1
\label{divbest}
\end{align} 
where $K_{0}=d_{_{SD}}^\alpha d_{_{SR}}^\alpha {\Gamma_{th}}^2$. \\
 Clearly, the diversity order = $-\lim\limits_{\text{SNR} \rightarrow \infty}\dfrac{\log P_{out}^{HSU} }{\log \text {(SNR)} } =2 $. 
\subsection{IOFP}
Proceedings as with IBEP, the outage probability expression for the IOFP given in \eqref{hsuonoffoutag1}  can be approximated at high SNR as follows:
\begin{align} 
P_{out}^{HSU} \approx & \hspace{0.2 cm} \frac{1}{\phi_{inc}} \left( W_{4}\Gamma_{th}\right)  \left( W_{2}\Gamma_{th}\right) + 
\left( 1-\frac{1}{\phi_{inc}}\right) \left( W_{2}\Gamma_{th}\right)
\hspace{0.4 cm}  \phi_{inc} > 1
\label{outa5}
\end{align}
\begin{align} 
P_{out}^{HSU} \approx & \hspace{0.2 cm} K_{1}\left( \frac{1}{\text{SNR}}\right)^{2}  +K_{2} \left( \frac{1}{\text{SNR}}\right) \hspace{0.4 cm}\phi_{inc} > 1
\label{outa6}
\end{align}
where $K_{1}=\dfrac{{\Gamma_{th}}^{2}d_{_{SR}}^\alpha d_{_{SD}}^\alpha}{\phi_{inc}}$,  and $K_{2}=\left( 1-\dfrac{1}{\phi_{inc}}\right)\Gamma_{th}d_{_{SD}}^\alpha$. Again \eqref{outa6} can be approximated when  $\text{SNR} \rightarrow \infty$ as follows:
\begin{align}
P_{out}^{HSU} \approx & \hspace{0.2 cm} K_{2} \left( \frac{1}{\text{SNR}}\right) \hspace{0.4 cm}\phi_{inc} > 1
\end{align}
Clearly, the diversity order = $-\lim\limits_{\text{SNR} \rightarrow \infty}\dfrac{\log P_{out}^{HSU} }{\log \left(\text{SNR} \right) } =1 $.\\
In the case of IOFP, the diversity order is limited to $1$,  due to the fact that the relay is silent in some signalling intervals  when  $M$ amount of energy is not present in the buffer (note that $M$ is large at high SNRs). As a result, the second term present in  \eqref{outa6} restricts the diversity order to $1$. \\
{\underline{NOTE:} For the case of $\phi_{inc} \leq 1,$     it is apparent from Theorems \ref{Theorem_Limiting_PDF not exists} and  \ref{on-off not exists} that $M$ amount of energy is almost always present in the buffer. Using \eqref{hsuoutage2} it can be easily shown that diversity order is $2$ for both the policies when the buffer is unstable.  

\section{simulation and numerical results}
In this section we present simulations to validate the derived expressions and to obtain insights into performance. In all the simulations, S, R and D are assumed to be located in a two dimensional plane. S is located at (0,0)  and D at (4,0). Further, path-loss exponent $\alpha=3$, and noise power $\sigma^2=-40$ dB. Since performance plots are presented for both incremental and non-incremental relaying, we will find it convenient to use $\phi$ to represent either $\phi_{inc}$ or $\phi_{ninc}$. For the existence of a limiting distribution $\phi=\phi_{inc}=\phi_{ninc}=1.1$ is chosen for all the figures except Fig.~\ref{outagevspinc}.  
  
  \begin{figure}[h!]
  	\begin{minipage}{.5\textwidth}
  		\centering
  		\includegraphics[width=1.2 \linewidth]{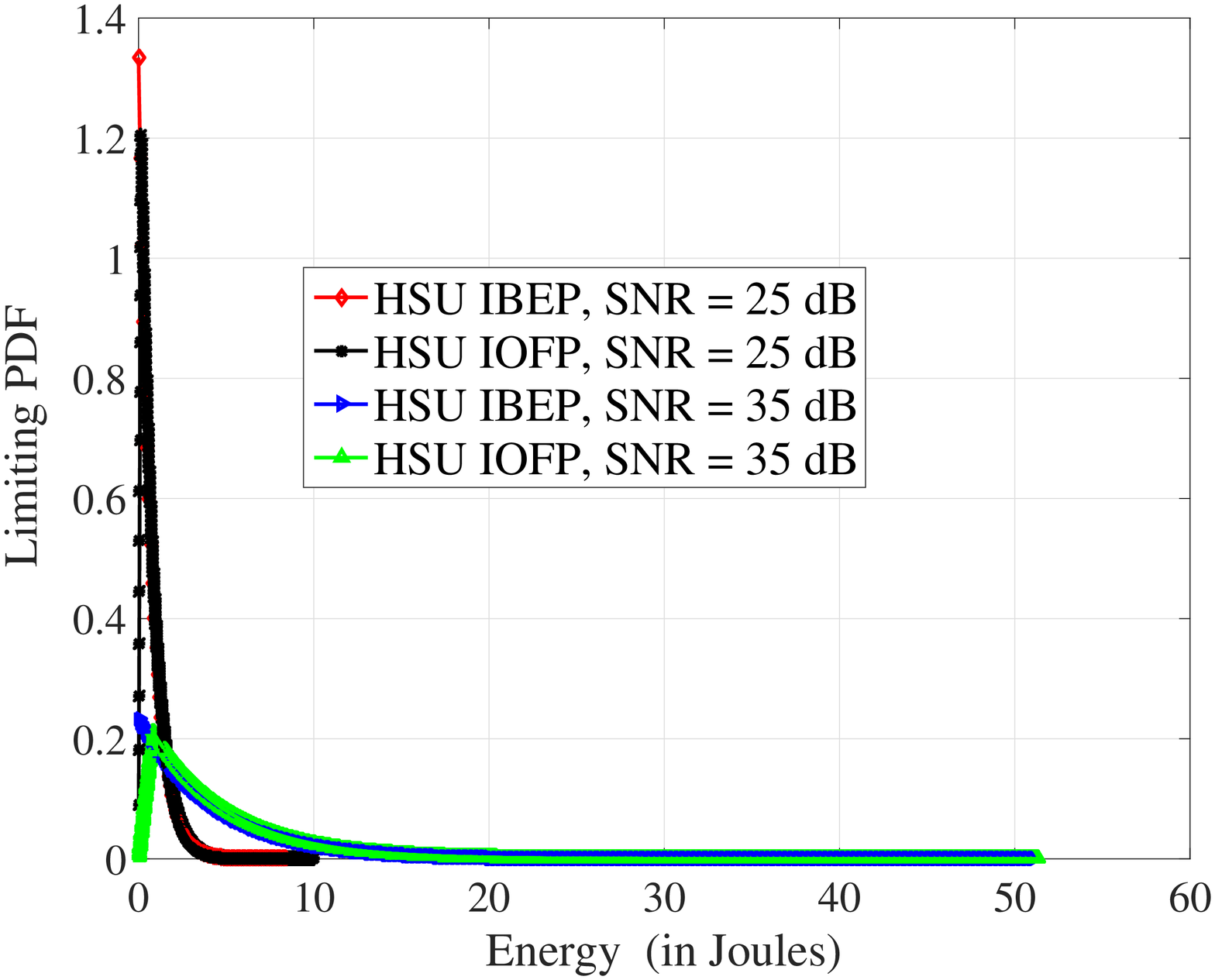}
  		\caption{\small Limiting distribution of energy with  infinite- \\size PEB and parameters $\frac {1}{\lambda_{1}}\hspace{-0.05in}=\hspace{-0.05in}-10$ dB, $\phi_{inc}=1.1$, \\$ R_{0}=1.5$ bpcu,  and relay location $(1,0)$.}
  		\label{limtinpdf}
  	\end{minipage}
  	\begin{minipage}{.5\textwidth}
  		\centering
  		\includegraphics[width=0.9\linewidth]{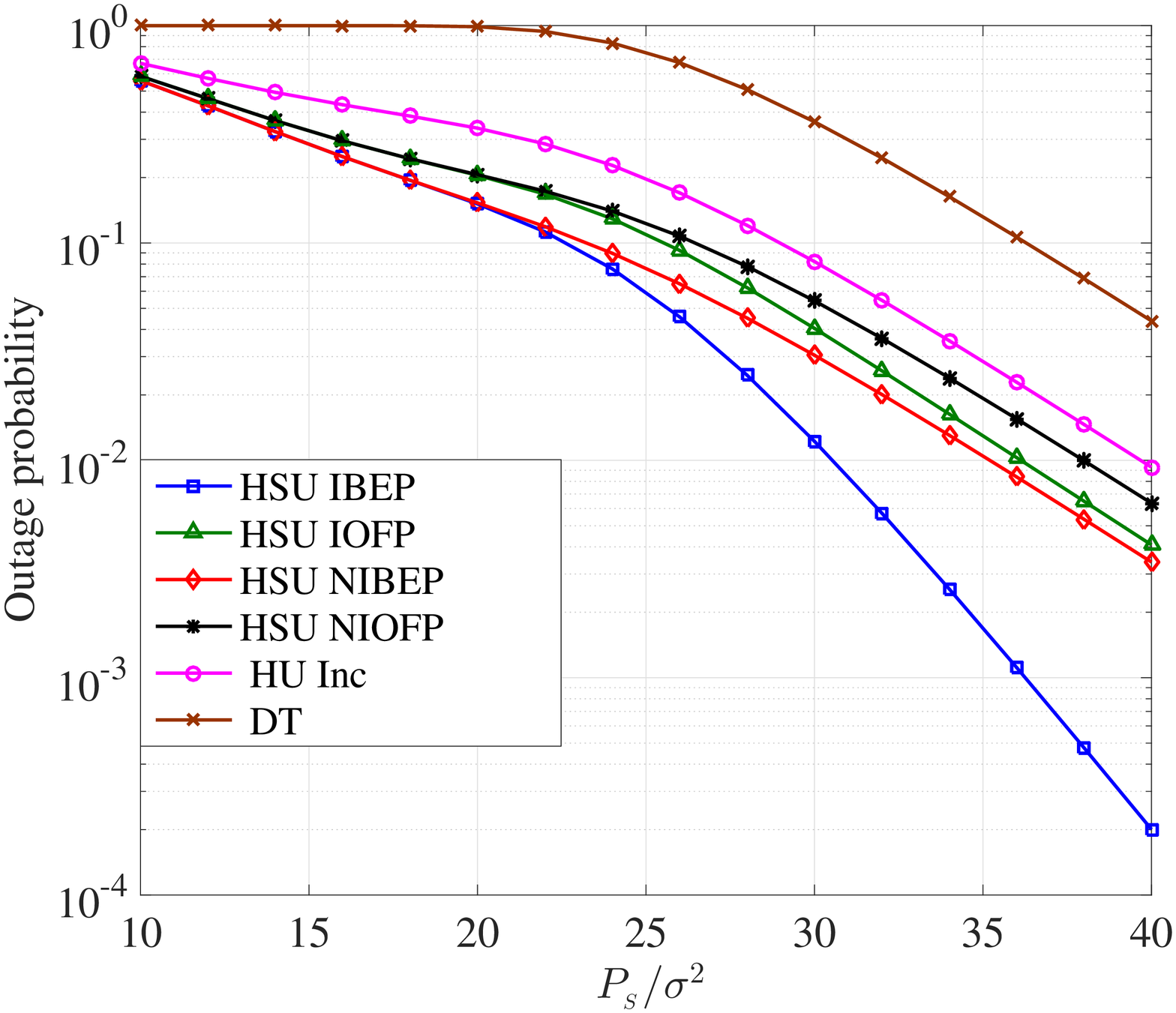}
  		\caption{\small Outage probability vs. source SNR $P_{_{S}}/{\sigma^{2}}$  with  parameters $\frac {1}{\lambda_{1}}=-10$ dB, relay location $(1,0)$,  $\phi=\phi_{inc}=\phi_{ninc}=1.1$,  and target rate $ R_{0}=1.5$ bpcu.}
  			\label{outavssnr}
  	\end{minipage}
  	\vspace*{-0.5cm}
      \end{figure}
  Fig.~\ref{limtinpdf} depicts the limiting distribution of energy with IOFP and IBEP  for two different source SNRs ($25$ dB and $35$ dB) assuming a fixed target rate $R_{0}=1.5$ bits per channel use (bpcu). The  analytical values are obtained using  \eqref{limpdfsplifin} and derivative of \eqref{limcdf}.  In this, and all figures that follow, the solid lines represent analytical plots, and the markers denote simulation values. It is evident from Fig.~\ref{limtinpdf} that when  the source SNR increases, the tail of the limiting PDF increases. This is because at higher SNR, R needs to transmit in fewer time slots due to frequent ACK from D (R therefore accumulates more energy).  
  \par Fig.~\ref{outavssnr} depicts the variation  of outage probability with  source SNR. If the transmit power of the source increases, the outage probability decreases for all three schemes (HSU, 
  HU and DT) due to the strong direct link between S and D.  As compared to DT and HU schemes, HSU yields better performance, clearly indicating that use of energy buffers is advantageous. One reason for this is that use of PEB with SEB in HSU allows R to charge and accumulate energy even when it is transmitting. Another reason is that at high source SNRs there is  accumulation of energy at R due to frequent ACK from D. This is not possible with the HU scheme. From Fig.~\ref{outavssnr} it can also be observed that at  low SNRs performance of incremental and non-incremental schemes with best-effort and on-off  policies is quite similar . But when the SNR is higher, IBEF is clearly superior in performance to all the other schemes, and  achieves full diversity of $2$ even with stable buffers. IOFP achieves lower diversity since R is silent  (because $B(i)<M$) in larger number of signalling intervals.   }
  \begin{figure}[h!]
  	\begin{minipage}{.3\textwidth}
  		\centering
  		\includegraphics[width=1 \linewidth]{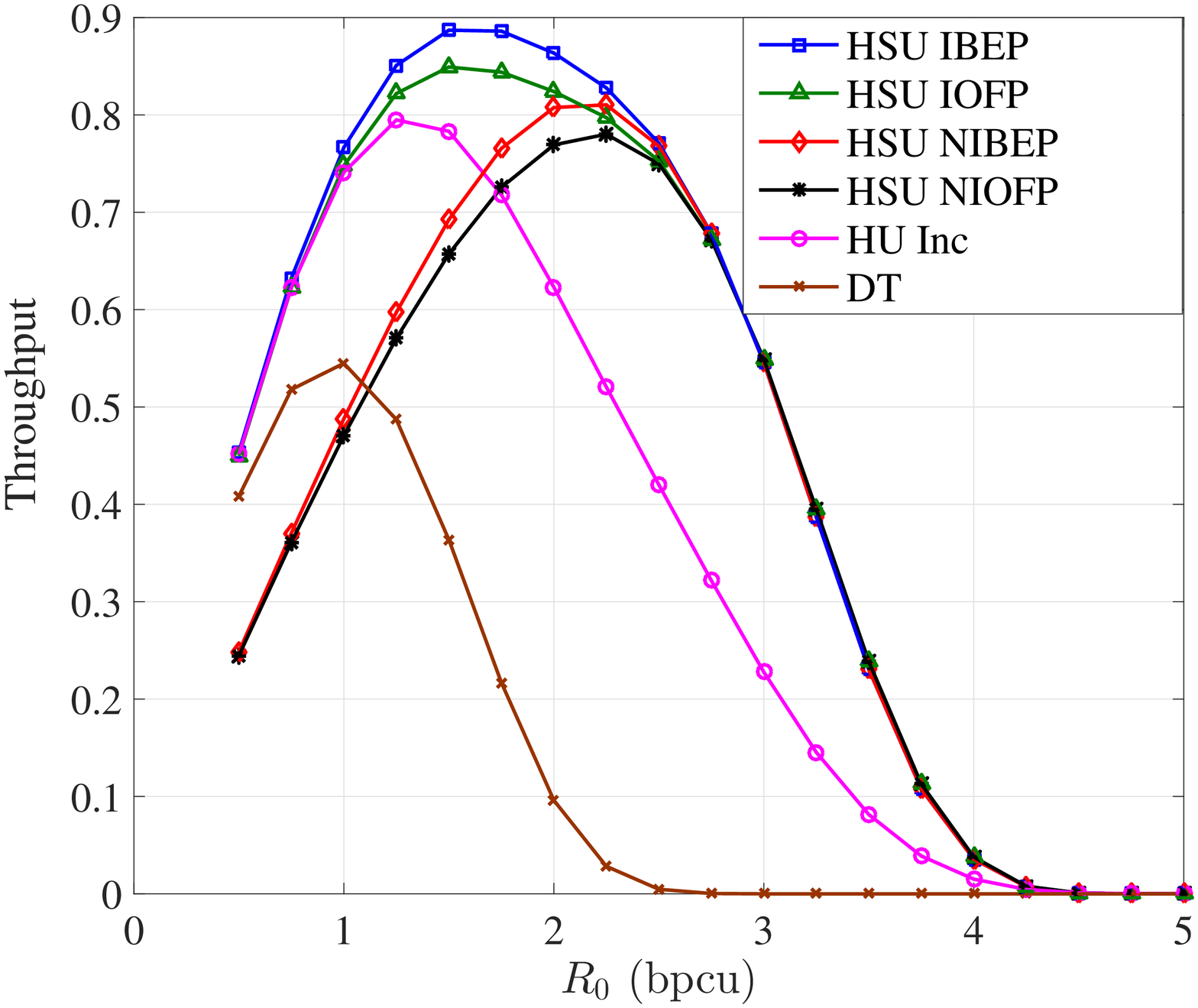}
  		\caption{\small Throughput vs. target rate $ R_{0}$ with parameters source SNR $P_{_{S}}/{\sigma^{2}}=25$ dB,  $\frac {1}{\lambda_{1}}=-10$ dB, \\ $\phi=\phi_{inc}=\phi_{ninc}=1.1$, and relay location $(1,0)$.}
  		\label{thrptvstarge}
  	\end{minipage}~
  	\begin{minipage}{.3\textwidth}
  		\centering
  		\includegraphics[width=1 \linewidth]{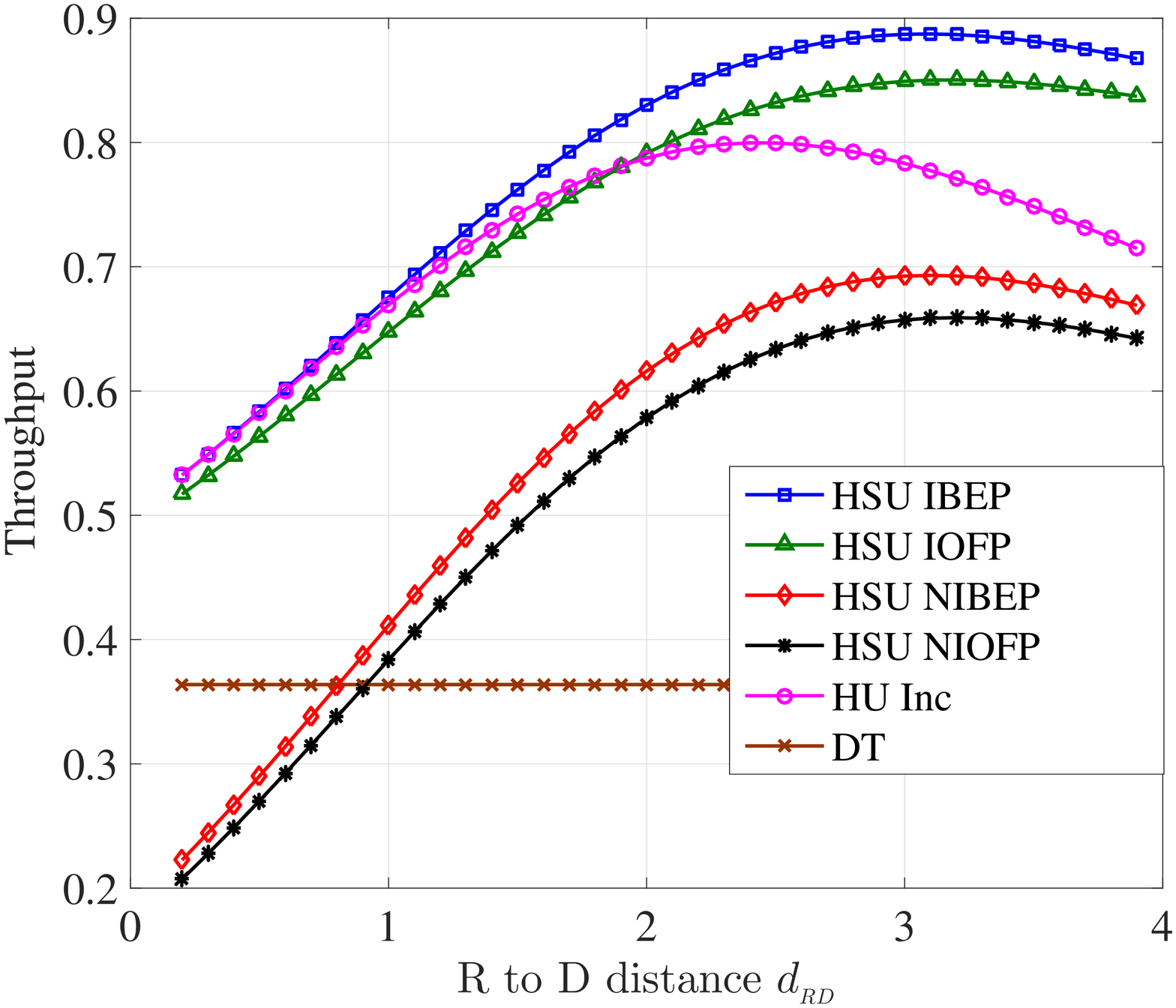}
  		\caption{\small Throughput vs. R to D distance  with parameters source SNR $P_{_{S}}/{\sigma^{2}}=25$ dB, $\frac {1}{\lambda_{1}}=-10$ dB, $\phi=\phi_{inc}=\phi_{ninc}=1.1$  and target rate $R_{0} = 1.5$ bpcu.}
  		\label{thrptvsrtod}
  	\end{minipage}~
  \begin{minipage}{.3\textwidth}
  	\centering
  	\includegraphics[width=1 \linewidth]{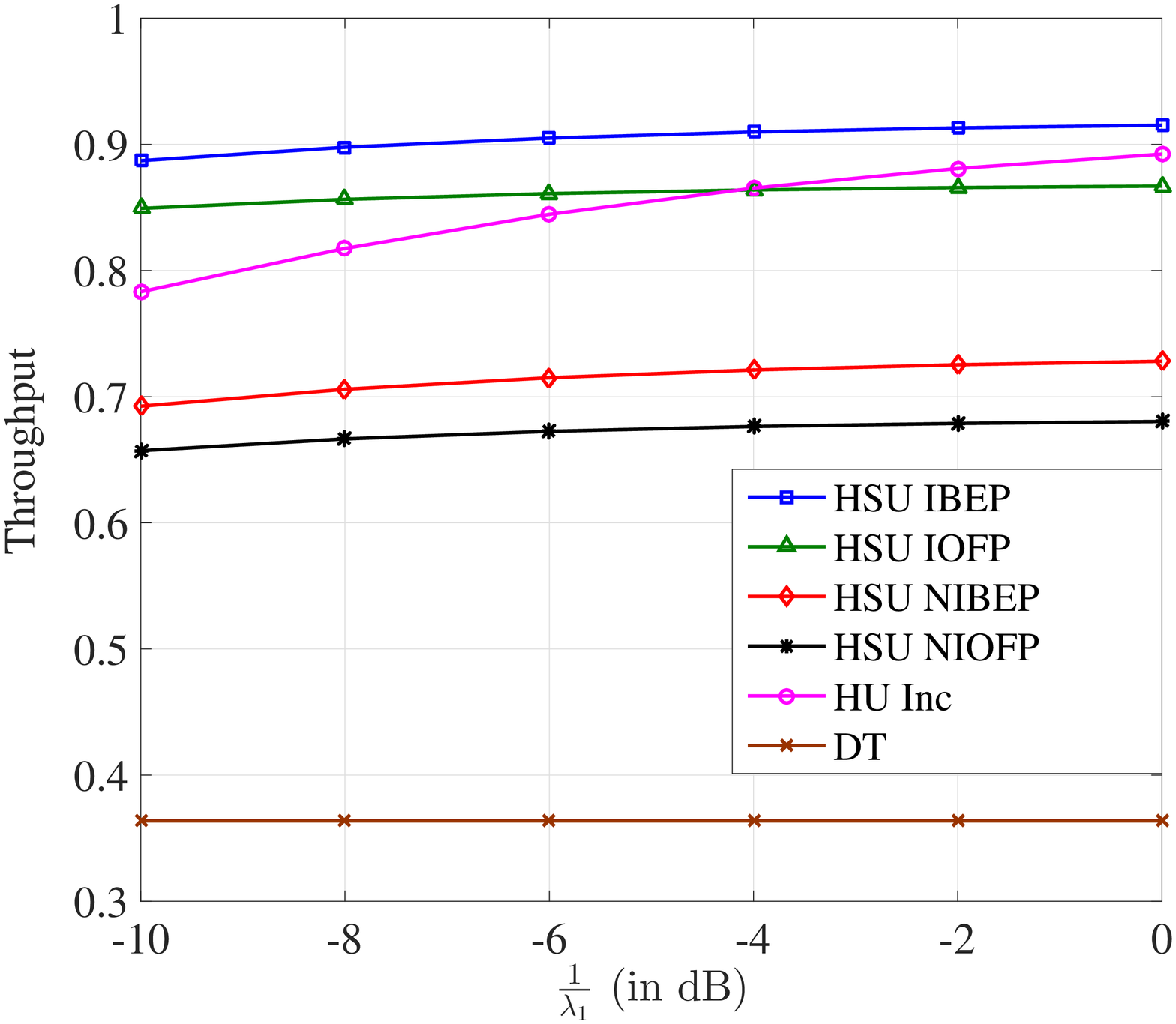}
  	\caption{\small Throughput vs. mean $ \frac{1}{\lambda_{1}} $ with source SNR  $P_{_{S}}/{\sigma^{2}}=25$ dB, relay location $(1,0)$, $\phi=\phi_{inc}=\phi_{ninc}=1.1$ and $ R_{0}= 1.5$ bpcu.}
  	\label{thrptvsmean}
  \end{minipage}~
  \vspace*{-0.5 cm}
  \end{figure}
\par Fig.~\ref{thrptvstarge} depicts the variation of throughput  with target rate $R_{0}$. From Fig.~\ref{thrptvstarge} we observe that at very low target rates $R_{0}$, HSU and HU yield practically the same throughput (which is quite intuitive).  If the target rate $R_{0}$ is increased, HSU performs better than HU. Further increase in $R_{0}$ results in lower throughput performance. It can be seen that HSU schemes result in larger throughput as compared to HU and DT, and can support larger target rates $R_0$ (due to use of the energy buffer). From Fig.~\ref{thrptvstarge}  it can also be observed that for low-to-medium target rates,  the performance of incremental schemes is superior to that of non-incremental schemes. For higher target rates, the performance is quite similar due to increase in frequency of NACK from D.
\par Fig.~\ref{thrptvsrtod} depicts the variation of throughput with respect to R-D distance  $d_{_{RD}}$ keeping S-D distance $d_{_{SD}}$ the same $\left(d_{_{SD}}=d_{_{SR}}+d_{_{RD}} \right) $. From Fig.~\ref{thrptvsrtod} it can be observed that when the R is located away from D, throughput increases for both HSU and HU upto some point since the probability of successful decoding at R increases (path-loss $d_{_{SR}}^{\alpha}$ decreases). However, as R moves closer to S, the increasing R-D distance causes loss in throughput as it reduces the probability of successful decoding at D (path-loss $d_{_{RD}}^{	\alpha}$ increases). Thus there exists an optimal R location for both HSU and HU as can be observed from Fig.~\ref{thrptvsrtod}. Note also that HSU results in higher throughputs as compared to HU when R is located further away from the D  (this once again is due to the energy buffer). From Fig.~\ref{thrptvsrtod} it can also be observed that the performance of incremental schemes is always superior to that of the non-incremental schemes for a given  $R_{0}$.\\ 
\begin{figure}[t!]
	\begin{minipage}{.5\textwidth}
		\centering
		\includegraphics[width=0.95\linewidth]{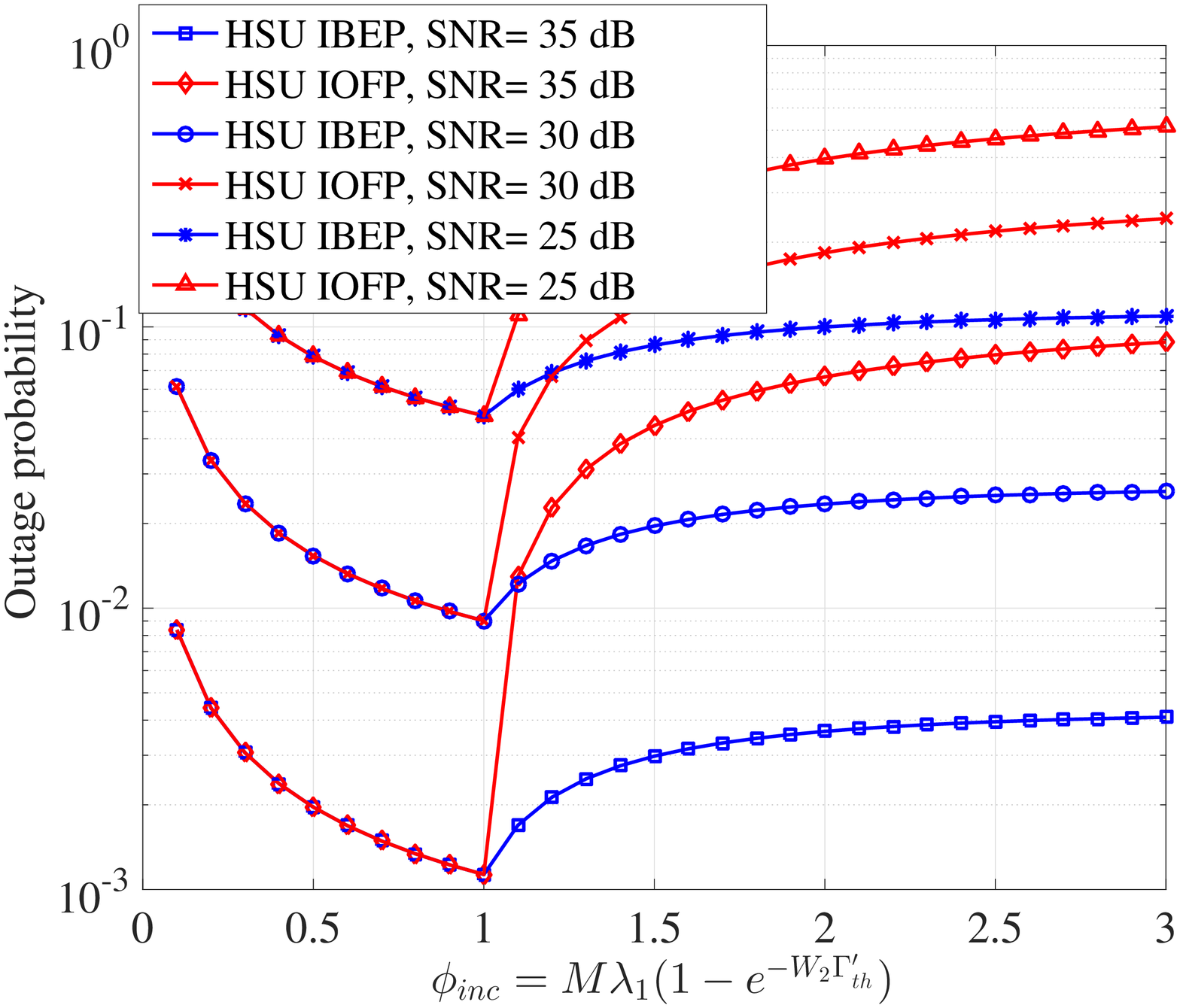}
		\caption{\small Outage probability vs. $\phi_{inc}$ with \\ parameters  $\frac {1}{\lambda_{1}}=-10$ dB, relay location $(1,0)$ and \\
			target rate  $ R_{0} = 1.5$ bpcu.}
		\label{outagevspinc}
	\end{minipage}
	\begin{minipage}{.5\textwidth}
	\centering
	\includegraphics[width=0.95\linewidth]{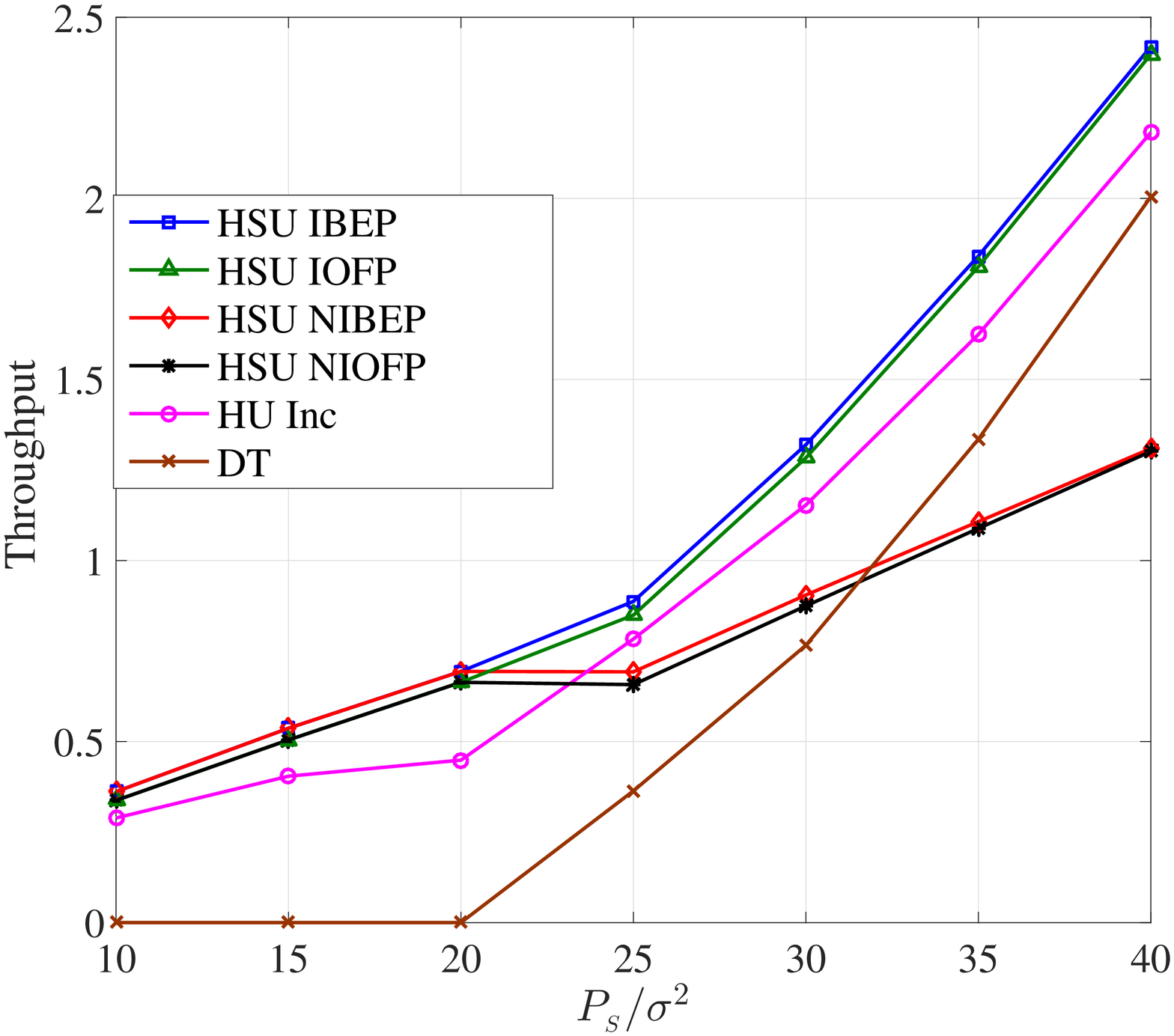}
	\caption{\small Throughput vs. source SNR $P_{_{S}}/{\sigma^{2}}$ with \\ parameters  $\frac {1}{\lambda_{1}}=-10$ dB, $\phi=\phi_{inc}=\phi_{ninc}=1.1$, \\ relay location $(1,0)$, and optimum target rate $ R_{0}$.}
	\label{thrptvssnr}
\end{minipage}
\vspace*{-1 cm}
\end{figure}
Fig.~\ref{thrptvsmean} depicts the variation of throughput with respect to mean of the incoming EH process  $\frac{1}{\lambda_{1}}$ (in dB). Clearly, throughput increases with increase in $ \frac{1}{\lambda_{1}}$ for HSU and HU.   With HSU architecture, as the mean $(\frac{1}{\lambda_{1}})$  increases, the energy stored at R increases. However,  the energy drawn from the buffer ($M$) also simultaneously  increases for both incremental and non-incremental schemes (since $\phi=\phi_{ninc}=\phi_{inc}$ is a constant). For large $ \frac{1}{\lambda_{1}}$ values, performance of HU is closer to that of HSU with on-off policy. From Fig.~\ref{thrptvsmean} it can also be observed that the incremental schemes result in performance that is superior to that of non-incremental schemes.  
\par Fig.~\ref{outagevspinc} shows that variation of outage vs. $\phi_{inc} $ for three different source SNRs ($25$, $30$ and $35$ dB). When $\phi_{inc}<1$, performance of IBEP and IOFP is the same (this is because $M$ units of energy is available almost surely with both schemes). When $\phi_{inc}>1$, IBEP results in better performance (the gap is particularly pronounced at high SNRs). This is because $M$ increases with increase in $\phi_{inc}$, which causes IOFP to keep R slient in more signalling intervals due to paucity of energy. Note that the optimum $\phi_{inc}$ is close to $1$. From Fig.~\ref{outagevspinc} it can also be observed that for high source SNR ($35$ dB), the performance of IOFP is very poor for $\phi_{inc} > 1$ compared to that of IBEP. This is because a very high $M$ is required at these SNRs, and the relay with IOFP has insufficient energy in some signalling intervals. With IBEP, a small amount of energy ($B(i)<M$) often sufices to ensure sufficient SNR at D.

Fig.~\ref{thrptvssnr} depicts the variation of throughput vs. source SNR $P_{_{S}}/{\sigma^{2}}$ assuming the optimum target rate $R_{0}$ is used at each source SNR. For low SNRs ($10$ to $20$ dB)   performance of HSU incremental and HSU non-incremental schemes is quite similar due to frequent failure of the S-D direct link. But when the SNR increases further (beyond $20$ dB) for the optimum target rate $R_{0}$, there is increase in ACKs with incremental schemes, which enable them to yield higher throughput. From Fig.~\ref{thrptvssnr} it can also be observed that at very high SNR, the throughput gap between incremental and non-incremental is very high for optimum target rate $R_{0}$ due to increase in frequency of ACKs from D with incremental signalling,  which causes energy accumulation in the buffer (this energy is used for relaying in the signalling intervals in which NACK is received from D).  At higher SNR, DT and HU incremental are also better than non-incremental transmission schemes.
 \section{conclusion}
In this paper, performance of energy buffer-aided incremental relaying is analyzed for fixed-rate transmission. A discrete-time continuous state-space Markov chain model is used for the energy buffer, and limiting distributions are derived for the stored energy with incremental best-effort and on-off policies. Expressions are derived for outage probabilities and throughput. Corresponding expressions for non-incremental schemes follow as special cases.  It is shown that with IBEP, diversity of two is attained with stable buffers, while diversity of only one is attained with IOFP. Throughput with IBEP is marginally superior to that with IOFP. 
\appendices
\section {Proof of Theorem \ref{Theorem_Limiting_PDF}} \label{Appendix_Limiting_PDF}
Using the  total probability theorem, the CDF of $B\left( i+1\right) $ can be evaluated for energy storage model of \eqref{BI2} as follows:
\begin{align}
&\Pr \{B\left( i+1\right) \leq x \}   =  \Pr \left\lbrace  \left( B\left( i\right)+X\left( i\right) \right)  \leq x \mid \gamma_{_{SD}}(i) \geq \Gamma^{\prime}_{th}\right\rbrace 
\Pr \{\gamma_{_{SD}}(i) \geq \Gamma^{\prime}_{th}\}\nn \\
&+ 
\Pr \left\lbrace (\gamma_{_{SD}}(i)<\Gamma^{\prime}_{th}), (B(i)\geq M)\right\rbrace 
\Pr \left\lbrace  \left( B\left( i\right)-M+X\left( i\right)\right)   \leq x \mid  (\gamma_{_{SD}}(i)<\Gamma^{\prime}_{th}), (B(i)\geq M) \right\rbrace \nn \\
&+ \Pr \{(\gamma_{_{SD}}(i)<\Gamma^{\prime}_{th}), (B(i)<M)\} 
\Pr \{ X\left( i\right) \leq x \mid  (\gamma_{_{SD}}(i)<\Gamma^{\prime}_{th}), (B(i)<M) \}.
\label{buffeq3}
\end{align}
 Let $B(i)=U$, $X(i)=X $ and $\gamma_{_{SD}}(i)=\gamma_{_{SD}}$.  Then  \eqref{buffeq3} can be written in terms of joint probabilities as:
 \begin{align}
 \Pr \{B\left( i+1\right) \leq x \}= & \Pr \{ U+X \leq x , (\gamma_{_{SD}}\geq \Gamma^{\prime}_{th}) \}\nn \\
 &+\Pr \{ U-M+X \leq x , (\gamma_{_{SD}}<\Gamma^{\prime}_{th}), (U\geq M) \} \nn \\ 
 & +\Pr \{ X \leq x , (\gamma_{_{SD}}<\Gamma^{\prime}_{th}), (U<M) \}.
 \label{10}
 \end{align}
 Let $G_{i}(x)$ be the CDF of $ B(i)$. Then $G_{i+1}(x)$ is the CDF of $B(i+1)$ i.e. $\Pr \{B\left( i+1\right) \leq x \}=G_{i+1}(x)$.
 After a long time ($i\rightarrow\infty$), the buffer reaches its steady-state so that $G_{i}(x)=G_{i+1}(x)=G(x).$ 
 In this state, \eqref{10} can be written as follows:
 \begin{align}
 G(x)= &\int_{y=\Gamma^{\prime}_{th}}^{\infty}\int_{u=0}^{x}F_{X}\left( x-u\right)g\left( u\right)f_{\gamma_{_{SD}}}\left( y\right)  du dy \nn \\
&+ \int_{y=0}^{\Gamma^{\prime}_{th}}\int_{u=M}^{M+x}F_{X}\left( x-u+M\right)g\left( u\right)f_{\gamma_{_{SD}}}\left( y\right)  du dy \nn \\
 &+\int_{y=0}^{\Gamma^{\prime}_{th}}\int_{u=0}^{M}F_{X}\left( x\right)g\left( u\right)f_{\gamma_{_{SD}}}\left( y\right)  du dy,
 \label{11}
 \end{align}
 where $g(x)$ is the derivative of  $G(x)$. 
 The PDFs of $\gamma_{_{SD}}$ and $X$ are given by: $ f_{\gamma_{_{SD}}}\left( y\right) =W_{2}\exp\left( -W_{2}y\right)$ and  $F_{X}\left( x\right) =1-\exp\left( -\lambda_{1}x\right)$.
  Substituting $ f_{\gamma_{_{SD}}}\left( y\right) $ in \eqref{11} we get:
  \begin{align}
  G(x)= \hspace{0.1 cm}&e^{-W_{2}\Gamma^{\prime}_{th}}\int_{u=0}^{x}F_{X}\left( x-u\right)g\left( u\right)du \nn \\
 & +\left( 1-e^{-W_{2}\Gamma^{\prime}_{th}}\right)\left[\int_{u=M}^{M+x}F_{X}\left( x-u+M\right)g\left( u\right)du + 
 \int_{u=0}^{M}F_{X}\left( x\right)g\left(u\right)du \right]  .
  \label{limcdfints}
  \end{align}
  Consider the term $ \displaystyle \int_{u=M}^{M+x}F_{X}\left( x-u+M\right)g\left( u\right)du+
\int_{u=0}^{M}F_{X}\left( x\right)g\left(u\right)du $. 
  Integrating by parts and using $ G\left( 0\right) =0$, the term can be simplified as follows:
 \begin{align*}
 =\hspace{0.1 cm}& F_{X}\left( x-u+M\right)G(u)\Bigm|_{u=M}^{M+x}+ 
 \int_{u=M}^{M+x}f_{X}\left(x-u+M\right) G\left( u\right) du+F_{X}\left( x\right) G\left( M\right) \\
= & -F_{X}\left( x\right) G\left( M\right)+\int_{u=M}^{M+x}f_{X}\left(x-u+M\right) G\left( u\right) du+F_{X}\left( x\right) G\left( M\right) \\
 =& \int_{u=M}^{M+x}f_{X}\left(x-u+M\right) G\left( u\right) du.
  \end{align*}
Similarly,  $\displaystyle\int_{u=0}^{x}F_{X}\left( x-u\right) g\left( u\right)du\hspace{-0.02in}=\hspace{-0.02in}\int_{u=0}^{x}G\left( u\right) f_{X}\left( x-u\right) du$ so that
 \eqref{limcdfints} can be simplified as: 
\begin{align}
 G(x)= \hspace{0.1 cm} & e^{-W_{2}\Gamma^{\prime}_{th}}\displaystyle\int\limits_{u=0}^{x}f_{X}\left( x-u\right)G\left( u\right)du 
 +\left( 1-e^{-W_{2}\Gamma^{\prime}_{th}}\right)\int\limits_{u=M}^{M+x}f_{X}\left( x-u+M\right)G\left( u\right)du.
 \label{13}
 \end{align}
 Using  $v=x-u+M$,  the second integral in \eqref{13} can be simplified as: \\ $\displaystyle\int_{u=M}^{M+x}f_{X}\left( x-u+M\right)G\left( u\right)du=\int_{u=0}^{x}f_{X}\left(u\right)G\left( x+M-u\right)du.$\\
With this, \eqref{13} can be simplified and rewritten as in \eqref{limcdfintegr}.
The integral equation \eqref{limcdfintegr}  can be solved by postulating a solution of the form \cite{D.V.Lindley1952},\cite{J.Gani1957}: 
\begin{align}
G\left( x\right) =\sum_{i}c_{i}e^{ -Z_{i}x}.
\label{cdfl}
\end{align} 
Substituting $f_{X}\left( x\right) =\lambda_{1}e^{ -\lambda_{1}x}$ in \eqref{limcdfintegr} and using \eqref{cdfl} we get:
\begin{align}
\sum_{i}c_{i}e^{-Z_{i}x}=\hspace{0.1 cm}& e^{-W_{2}\Gamma^{\prime}_{th}}\lambda_{1}e^{ -\lambda_{1}x} \int_{u=0}^{x}e^{ \lambda_{1}u} \sum_{i}c_{i}e^{-Z_{i}u}du \nn \\
  &+\left( 1-e^{-W_{2}\Gamma^{\prime}_{th}}\right)\lambda_{1}\int_{u=0}^{x}e^{-\lambda_{1}u} \sum_{i}c_{i}e^{ -Z_{i}\left(x+M-u\right)}du.
\end{align}
Exchanging the summation and integration, and  simplifying,  we get:
\begin{align}
\sum_{i}c_{i}e^{-Z_{i}x}= &\sum_{i}\left[  \frac {e^{-W_{2}\Gamma^{\prime}_{th}}\lambda_{1}+\left( 1-e^{-W_{2}\Gamma^{\prime}_{th}}\right)\lambda_{1}e^{ -Z_{i}M} }{\left( \lambda_{1}-Z_{i}\right)} \right]c_{i}e^{-Z_{i}x} \nn \\
&-e^{-\lambda_{1}x} 
\sum_{i}\left[  \frac {e^{-W_{2}\Gamma^{\prime}_{th}}\lambda_{1}+\left( 1-e^{-W_{2}\Gamma^{\prime}_{th}}\right)\lambda_{1}e^{-Z_{i}M} }{\left( \lambda_{1}-Z_{i}\right)} \right]c_{i}. 
\end{align}
For both sides of the above equation to be equal, the following conditions need to be satisfied:
\begin{align}
\frac {e^{-W_{2}\Gamma^{\prime}_{th}}\lambda_{1}+\left( 1-e^{-W_{2}\Gamma^{\prime}_{th}}\right)\lambda_{1}e^{-Z_{i}M} }{\left( \lambda_{1}-Z_{i}\right)}=1  \hspace{0.5cm} \forall i ,\hspace{1 cm} \text{and} \hspace{1 cm} \sum_{i}c_{i}=0.
\label{17}
\end{align}
Clearly, \eqref{17} has two roots \cite{J.Gani1957}, one default  root is $Z_{0}=0$ for $i=0$, and other root  $Z_{1}$ for $i=1$ can be seen to satisfy the following equation:
\begin{align}
e^{-W_{2}\Gamma^{\prime}_{th}}\lambda_{1}+\left( 1-e^{-W_{2}\Gamma^{\prime}_{th}}\right)\lambda_{1}e^{-Z_{1}M}=\left( \lambda_{1}-Z_{1}\right),\\ \Rightarrow
e^{ -Z_{1}M} =\frac{-Z_{1}M}{\left( 1-e^{-W_{2}\Gamma^{\prime}_{th}}\right)\lambda_{1}M}+1.
\label{othroot}
\end{align}
With $b=1$, $a=\dfrac{1}{\left( 1-e^{-W_{2}\Gamma^{\prime}_{th}}\right)\lambda_{1}M}$, $p=e$,  $x=-Z_{1}M$,
the above is of the form $p^{x}=ax+b$,
 whose solution is $ x=-\mathcal{W}\left( -a^{-1}p^{-\frac{b}{a}}\ln p\right)-\dfrac{b}{a} $,  when $ p > 0 $ and $ p\neq 1$ ($ a \neq 0$),
where $\mathcal{W}\left( \cdot\right) $ is Lambert-W function \cite{Tony}. This implies that:
\begin{align}
Z_{1}=\frac{\mathcal{W}\left( -\frac{1}{a}\exp{\left( -\frac{1}{a}\right) }\right)}{M}+\frac{1}{aM}.
\label{z1vas}
\end{align}
For the limiting distribution to exist,  $Z_{1}$ should be positive $(Z_{1} >0)$ so that
\begin{align}
\hspace{1 cm} & \frac{\mathcal{W}\left( -\frac{1}{a}\exp{\left( -\frac{1}{a}\right) }\right)}{M}+\frac{1}{aM} >0
\label{22}
\end{align}
Using the property of Lambert-W function
$\mathcal{W}\left( -x\exp\left( -x\right)\right) =-x \hspace{0.1 cm}$ ($0< x\leq 1$), 
the condition for existence of limiting distribution can be seen to be: 
\begin{align}
\frac{1}{a} > 1 &\Rightarrow \frac{1}{\left( 1-e^{-W_{2}\Gamma^{\prime}_{th}}\right)\lambda_{1}M} <1 \Rightarrow M>\frac{1}{\left( 1-e^{-W_{2}\Gamma^{\prime}_{th}}\right)\lambda_{1}} \nn \\
& \Rightarrow M > \frac {\mathbb{E}{\{X(i)}\}}{\left( 1-e^{-W_{2}\Gamma^{\prime}_{th}} \right) }.
\end{align}
It is clear from (\ref{phinc}) that the above implies that $\phi_{inc} >1$. From \eqref{cdfl}, we have:
\begin{align}
G\left( x\right) =\sum_{i}c_{i}e^{-Z_{i}x}=c_{0}e^{-Z_{0}x} +c_{1}e^{ -Z_{1}x}=c_{0}+c_{1}e^{-Z_{1}x} .
\label{cf}
\end{align}
Since $G\left( x\right) $ is limiting CDF of energy stored in the PEB, $G(0)=0$ and $G(\infty)=1$. The latter implies that $c_0=1$ and the former shows that $c_0+c_1=0$ ($c_1=-1$).
 Substituting the values of $c_{0}$,  $c_{1}$ in \eqref{cf}, we get
$ G\left( x\right) =1-\exp\left( -Z_{1}x\right)$. After substituting  for `$a$'  in \eqref{z1vas} we obtain the  $Z_{1}$ as in \eqref{limcdf}.
\section{Proof of Theorem \ref{Theorem_Limiting_PDF not exists}} \label{Appendix_Limiting_PDF not exists}
Let the S-D link outage indicator variable $ \mathbb{O}_{_{SD}}\left( i\right) \in \left( 0,1\right) $  be  defined as follows:
\begin{equation}
\mathbb{O}_{_{SD}}\left( i\right)\triangleq\begin{cases}
1, & \text{if $\gamma_{_{SD}}\left( i\right) < \Gamma^{\prime}_{th}$}\\
0, & \text{if $\gamma_{_{SD}}\left( i\right) \geq \Gamma^{\prime}_{th}$}.
\end{cases}
\label{indi}
\end{equation}
Let $E_{R}\left( i\right)= \min\left( B\left( i\right),M\right).$  With \eqref{indi}, the buffer update equations in \eqref{BI2} can be rewritten as:
\begin{align}
B\left( i+1\right) - B\left( i\right)  = X\left( i\right)-E_{R}\left( i\right)\mathbb{O}_{_{SD}}\left( i\right).
\label{buffequal}
\end{align}
Ergodicity implies that
$\lim\limits_{N \rightarrow \infty}\dfrac{1}{N}\sum_{i=1}^{N}\mathbb{O}_{_{SD}}\left( i\right)=\mathbb{E}\{\mathbb{O}_{_{SD}}\left( i\right)\}$ so that:

\begin{align}
\lim\limits_{N \rightarrow \infty}\frac{1}{N}\sum_{i=1}^{N}\mathbb{O}_{_{SD}}\left( i\right)= & \mathbb{E}\{\mathbb{O}_{_{SD}}\left( i\right)\}=  1\times\Pr\{\mathbb{O}_{_{SD}}\left( i\right)=1\}+0\times\Pr\{\mathbb{O}_{_{SD}}\left( i\right)=0\}  \nn \\
=&\Pr\{\mathbb{O}_{_{SD}}\left( i\right)=1\}=\Pr\{\gamma_{_{SD}}\left( i\right) < \Gamma^{\prime}_{th}\}=\left( 1-e^{-W_{2}\Gamma^{\prime}_{th}}\right). 
\label{33}
\end{align}
 From the law of conservation of flow, average arrival rate is always greater than or equal to average departure rate i.e. $\mathcal{A} \geq \mathcal{D}$. $\mathcal{A}=\mathcal{D}$ holds only when the buffer is non-absorbing. 
 Using law of large numbers, the average arrival rate is given by: 
 \begin{align}
 \mathcal{A}=\mathbb{E}\left\lbrace X\left( i\right) \right\rbrace =\lim\limits_{N \rightarrow \infty}\dfrac{1}{N} \sum_{i=1}^{N}X\left( i\right)
 \label{avgarrival}
 \end{align} 
Similarly the average departure rate $\mathcal{D}$ is given by:
\begin{align}
\mathcal{D}=\lim\limits_{N \rightarrow \infty}\frac{1}{N}\sum_{i=1}^{N}E_{R}\left( i\right)\mathbb{O}_{_{SD}}\left( i\right)=\lim\limits_{N \rightarrow \infty}\frac{1}{N}\sum_{B(i)\geq M}M\mathbb{O}_{_{SD}}\left( i\right) +\lim\limits_{N \rightarrow \infty}\frac{1}{N}\sum_{B(i)< M}B(i)\mathbb{O}_{_{SD}}\left( i\right)
\end{align}
 The relay transmit energy $ E_{R}\left( i\right)$ is upper bounded by  $ E_{R}\left( i\right) \leq M \Rightarrow $
\begin{align}
\lim\limits_{N \rightarrow \infty}\frac{1}{N}\sum_{i=1}^{N}E_{_{R}}\left( i\right)\mathbb{O}_{_{SD}}\left( i\right) \leq \lim\limits_{N \rightarrow \infty} \frac{1}{N}\sum_{i=1}^{N}M\mathbb{O}_{_{SD}}\left( i\right)
\label{upperbou}
\end{align} 
For $\phi_{inc} < 1 \Rightarrow $  it follows from \eqref{33} that:
\begin{align}
\mathbb{E}\{X\left( i\right)\} > M\left( 1-e^{-W_{2}\Gamma^{\prime}_{th}}\right)\Rightarrow \lim\limits_{N \rightarrow \infty}  \frac{1}{N}\sum_{i=1}^{N}X\left( i\right) > \lim\limits_{N \rightarrow \infty}\frac{1}{N}\sum_{i=1}^{N}M\mathbb{O}_{_{SD}}\left( i\right) 
\label{inequ} 
\end{align}
From \eqref{upperbou} and \eqref{inequ} we get :
\begin{align}
\lim\limits_{N \rightarrow \infty}\frac{1}{N}\sum_{i=1}^{N}X\left( i\right) > \lim\limits_{N \rightarrow \infty}\frac{1}{N}\sum_{i=1}^{N}E_{_{R} }\left( i\right)\mathbb{O}_{_{SD}}\left( i\right)
\label{buffabsor}
\end{align}
From \eqref{buffequal} and \eqref{buffabsor} we get:
\begin{align}
\lim\limits_{N \rightarrow \infty}\frac{1}{N}\sum_{i=1}^{N}X\left( i\right) > \lim\limits_{N \rightarrow \infty}\frac{1}{N}\sum_{i=1}^{N}E_{_{R} }\left( i\right)\mathbb{O}_{_{SD}}\left( i\right) \Rightarrow \lim\limits_{N \rightarrow \infty}\frac{1}{N}\sum_{i=1}^{N}B\left( i+1\right)>\lim\limits_{N \rightarrow \infty} \frac{1}{N}\sum_{i=1}^{N}B\left( i\right)
\label{expe}
\end{align}
Clearly, \eqref{expe} indicates that the buffer is in an absorbing state ( i.e. $ \mathcal{A} > \mathcal{D}$), which makes it unstable. In this case the average buffer length increases with each signalling interval. After a certain number of signalling intervals, $M$ amount of energy is present in the buffer almost surely i.e. $\Pr \left( B\left( i\right)  \geq M \right) \rightarrow 1. $\\
 For $\phi_{inc}=1 \Rightarrow $ it follows from \eqref{33} that:
 \begin{align}
\mathbb{E}\{X\left( i\right)\} = M\left( 1-e^{-W_{2}\Gamma^{\prime}_{th}}\right)\Rightarrow \lim\limits_{N \rightarrow \infty}\frac{1}{N}\sum_{i=1}^{N}X\left( i\right)  = \lim\limits_{N \rightarrow \infty}\frac{1}{N}\sum_{i=1}^{N}M \mathbb{O}_{_{SD}}(i)
\label{inflow} 
\end{align} 
From \eqref{upperbou} and \eqref{inflow} we get :
\begin{align}
\lim\limits_{N \rightarrow \infty}\frac{1}{N}\sum_{i=1}^{N}X\left( i\right)  \geq \lim\limits_{N \rightarrow \infty}\frac{1}{N}\sum_{i=1}^{N}E_{_{R}}\left( i\right)\mathbb{O}_{_{SD}}(i)
\label{balcritical}
\end{align}
In  \eqref{balcritical} when inequality condition holds, the buffer is unstable i.e. $\mathcal{A} > \mathcal{D}$. Then from \eqref{expe} the buffer is in absorbing state,  which is unstable. We can get almost surely $M$ amount of energy from the buffer i.e. $\Pr \left( B\left( i\right)  \geq M \right) \rightarrow 1.$
 In \eqref{balcritical} if  equality holds, then from \eqref{inflow}, \eqref{buffequal} and \eqref{balcritical} we get:
\begin{align}
\lim\limits_{N \rightarrow \infty}\frac{1}{N}\sum_{i=1}^{N}M\mathbb{O}_{_{SD}}(i)=\lim\limits_{N \rightarrow \infty}\frac{1}{N}\sum_{i=1}^{N}X\left( i\right)=\lim\limits_{N \rightarrow \infty}\frac{1}{N}\sum_{i=1}^{N}E_{_{R}}\left( i\right)\mathbb{O}_{_{SD}}(i)
\label{buffbala}
\end{align}
Using $E_R(i)=\min(B(i),M)$,  \eqref{buffbala} gives:
\begin{align}
\lim\limits_{N \rightarrow \infty}\frac{1}{N}\sum_{i=1}^{N}M\mathbb{O}_{_{SD}}(i)=\lim\limits_{N \rightarrow \infty}\frac{1}{N}\sum_{B(i) \geq M} \hspace{-0.3 cm} M\mathbb{O}_{_{SD}}(i)+\lim\limits_{N \rightarrow \infty}\frac{1}{N}\sum_{B(i) < M }\hspace{-0.3 cm}B(i)\mathbb{O}_{_{SD}}(i)
\label{critically}
\end{align}
Clearly, \eqref{critically} holds only when the buffer is at the edge of non-absorbing (but it is at the boundary of absorbing and non-absorbing state). This kind of buffers are termed critically balanced buffers. These buffers may be stable, sub-stable or unstable \cite{N.Zlatanov2013},\cite{R.Loynes}. Usually the Markov chain sequences which are critically balanced are unstable \cite{R.Loynes}. From \eqref{critically} it can be observed that the second term is zero, and that $B(i)>M$ almost surely ($E_{_{R}}\left( i\right)=M$  and  $\Pr \left( B\left( i\right)  \geq M \right) \rightarrow 1. $ Thus $M$ amount of energy is almost always present in the buffer for $\phi_{inc}= 1$. 
 \section{Proof of Theorem \ref{Theorem_Limiting_PDF On-Off}} \label{Appendix_Limiting_PDF On-Off}
  Using an approach similar to that used in Appendix \ref{Appendix_Limiting_PDF} for IBEP,   the CDF of $B\left( i+1\right) $ with IOFP can be obtained from energy storage model  in \eqref{BIonoff} as follows:
 \begin{align}
  \Pr \{B\left( i+1\right) \leq x \}= & \Pr \left\lbrace  \left( B\left( i\right)+X\left( i\right) \right)  \leq x , (\gamma_{_{SD}}(i)\geq \Gamma^{\prime}_{th}) \right\rbrace  \nn \\
  &+\Pr \left\lbrace \left(  B\left( i\right)-M+X\left( i\right) \right)  \leq x , (\gamma_{_{SD}}(i)<\Gamma^{\prime}_{th}), (B(i)\geq M) \right\rbrace  \nn \\ 
  & +\Pr \left\lbrace  \left( B\left( i\right)+X\left( i\right)\right)  \leq x , (\gamma_{_{SD}}(i)<\Gamma^{\prime}_{th}), \left( B(i)<M\right)  \right\rbrace .
  \label{onoffsimp} 
  \end{align}
  With $B(i)=U, X(i)=X $ and $\gamma_{_{SD}}(i)=\gamma_{_{SD}}$,  \eqref{onoffsimp} can be written as:
  \begin{align}
  \Pr \{B\left( i+1\right) \leq x \}= & \Pr \{\left(  U+X\right)  \leq x , (\gamma_{_{SD}} \geq \Gamma^{\prime}_{th}) \}\nn \\
  &+\Pr \{\left(  U-M+X\right)  \leq x , (\gamma_{_{SD}}<\Gamma^{\prime}_{th}), (U \geq M) \} \nn \\ 
  & +\Pr \{\left(  U+X \right) \leq x , (\gamma_{_{SD}}<\Gamma^{\prime}_{th}), (U<M) \}.
  \label{onoffsolv}
  \end{align}
  Let $G_{i}(x)$ be the CDF of $ B(i)$, then $G_{i+1}(x)$ is the CDF of $B(i+1)$ i.e. $\Pr \{B\left( i+1\right) \leq x \}=G_{i+1}(x)$.
  After a long time ($i\rightarrow\infty$), the buffer reaches its steady-state so that $G_{i}(x)=G_{i+1}(x)=G(x).$ 
  In this state, \eqref{onoffsolv} can be simplified as follows:
  \begin{align}
  G(x)= &\displaystyle\int_{y=\Gamma^{\prime}_{th}}^{\infty}\displaystyle\int_{u=0}^{x}F_{X}\left( x-u\right)g\left( u\right)f_{\gamma_{_{SD}}}\left( y\right)  du dy \nn \\
  &+ \int_{y=0}^{\Gamma^{\prime}_{th}}\int_{u=M}^{M+x}F_{X}\left( x-u+M\right)g\left( u\right)f_{\gamma_{_{SD}}}\left( y\right)  du dy \nn \\
  &+\int_{y=0}^{\Gamma^{\prime}_{th}}\int_{u=0}^{\min
  	(x,M)}F_{X}\left( x-u\right)g\left( u\right)f_{\gamma_{_{SD}}}\left( y\right)  du dy.
  \label{onoff11}
  \end{align}
  In \eqref{onoff11} the upper limit of third term integral is $\min(x,M)$ since $U<M$ and $F_{X}(0)=0$.  
  Substituting $ f_{\gamma_{_{SD}}}\left( y\right)=W_{2}\exp\left( -W_{2}y\right) $ in \eqref{onoff11} it can be simplified as in \eqref{limcdfintsonoff}.  Using $f_{\gamma_{_{SD}}}\left( y\right)$, when $x \geq M$, the function $G(x)$ is:
  \begin{align}
  G(x)= & e^{-W_{2}\Gamma^{\prime}_{th}}\int_{u=0}^{x}F_{X}\left( x-u\right)g\left( u\right)du 
  +\left( 1-e^{-W_{2}\Gamma^{\prime}_{th}}\right)\left[\int_{u=M}^{M+x}F_{X}\left( x-u+M\right)g\left( u\right)du \right . \nn \\
  &\left.+ \int_{u=0}^{M}F_{X}\left( x-u\right)g\left(u\right)du \right] \hspace{2.9 cm}     x \geq M
  \label{cdfonoff2}
  \end{align}
  Let $g\left( x\right) $ can be defined as:
  \begin{align}
  g\left( x\right) =
  \begin{cases}
  g_{1}\left( x\right)  \hspace{2.9 cm} 0 \leq x < M \\
  g_{2}\left( x\right)  \hspace{2.9 cm}  x \geq M
  \end{cases}
  \label{limpdfsplit}
  \end{align}
  From \eqref{cdfonoff2} and \eqref{limpdfsplit} the function $G\left( x\right) $ can be written as:
 \begin{align}
 G(x)= & \hspace{0.1 cm} e^{-W_{2}\Gamma^{\prime}_{th}}\int_{u=0}^{M}F_{X}\left( x-u\right)g_{1}\left( u\right)du +e^{-W_{2}\Gamma^{\prime}_{th}}\int_{u=M}^{x}F_{X}\left( x-u\right)g_{2}\left( u\right)du +\nn \\
 &\left( 1-e^{-W_{2}\Gamma^{\prime}_{th}}\right)\left[\int_{u=M}^{M+x}F_{X}\left( x-u+M\right)g_{2}\left( u\right)du + \int_{u=0}^{M}F_{X}\left( x-u\right)g_{1}\left(u\right)du \right] \hspace{0.2 cm}     x \geq M
 \label{cdfonoffpart2}
 \end{align}
 After cancellation of some terms in \eqref{cdfonoffpart2} and differentiating 
  \eqref{cdfonoffpart2} w.r.t. `$x$' on both sides, we get:
 \begin{align}
 g_{2}\left( x\right) = & \int_{u=0}^{M}f_{X}\left( x-u\right)g_{1}\left( u\right)du +e^{-W_{2}\Gamma^{\prime}_{th}}\int_{u=M}^{x}f_{X}\left( x-u\right)g_{2}\left( u\right)du +\nn \\
 &\left( 1-e^{-W_{2}\Gamma^{\prime}_{th}}\right)\int_{u=M}^{M+x}f_{X}\left( x-u+M\right)g_{2}\left( u\right)du  \hspace{1.6 cm}     x \geq M.
 \label{onoffintegr}
 \end{align}
  We  postulate that \eqref{onoffintegr}  has a solution of the form $g_{2}\left( x\right) =ke^{Qx}$ where $k$ is a constant. Using $g_{2}\left( x\right)=ke^{Qx}$ and $f_{X}\left( x\right)=\lambda_{1}e^{-\lambda_{1}x} $, \eqref{onoffintegr} can be simplified as follows:
 \begin{align}
ke^{Qx} = & \hspace{0.1 cm} \lambda_{1}e^{-\lambda_{1}x} \int_{u=0}^{M}e^{\lambda_{1}u}g_{1}\left( u\right)du +e^{-W_{2}\Gamma^{\prime}_{th}}\lambda_{1}e^{-\lambda_{1}x}\int_{u=M}^{x}e^{\lambda_{1}u}ke^{Qu}du +\nn \\
 &\left( 1-e^{-W_{2}\Gamma^{\prime}_{th}}\right)\lambda_{1}e^{-\lambda_{1}x}\int_{u=M}^{M+x}e^{\lambda_{1}\left( u-M\right) }ke^{Qu}du  \hspace{1.6 cm}     x \geq M
 \label{onoffintsol}
 \end{align}
 Simplifying \eqref{onoffintsol} we get:
 \begin{align}
 ke^{Qx} = & \hspace{0.1 cm} \lambda_{1}e^{-\lambda_{1}x}\left[  \int_{u=0}^{M}e^{\lambda_{1}u}g_{1}\left( u\right)du -\frac{ke^{MQ}}{\left( Q+\lambda_{1}\right) }\left[\left( 1-e^{-W_{2}\Gamma^{\prime}_{th}}\right)+e^{M\lambda_{1}}e^{-W_{2}\Gamma^{\prime}_{th}}\right]  \right] \nn \\
 &+\frac{k\lambda_{1}e^{Qx}}{\left(Q+\lambda_{1}\right)}\left[  e^{-W_{2}\Gamma^{\prime}_{th}}+e^{MQ}\left( 1-e^{-W_{2}\Gamma^{\prime}_{th}}\right) \right]
 \hspace{1.6 cm}     x \geq M.
 \label{onoffequat}
 \end{align}
 Both sides of \eqref{onoffequat} are equal when:
 \begin{align}
 \int_{u=0}^{M}e^{\lambda_{1}u}g_{1}\left( u\right)du=\frac{ke^{MQ}}{\left( Q+\lambda_{1}\right) }\left[\left( 1-e^{-W_{2}\Gamma^{\prime}_{th}}\right)+e^{M\lambda_{1}}e^{-W_{2}\Gamma^{\prime}_{th}}\right], \end{align}
 and
 \begin{align}
\frac{Q+\lambda_{1}}{\lambda_{1}}= e^{-W_{2}\Gamma^{\prime}_{th}}+e^{MQ}\left( 1-e^{-W_{2}\Gamma^{\prime}_{th}}\right).
\label{onoffnonequ}
 \end{align}
 Simplifying \eqref{onoffnonequ} we get:
 \begin{align}
 \lambda_{1}\left( 1-e^{-W_{2}\Gamma^{\prime}_{th}}\right)e^{MQ}=Q+\lambda_{1}\left( 1-e^{-W_{2}\Gamma^{\prime}_{th}}\right).
 \label{onoffeqcond}
  \end{align}
  One default solution of \eqref{onoffeqcond} is $Q=0$. However, this is not feasible solution for the limiting PDF  (area under the PDF needs to be unity).
   The other solution of \eqref{onoffeqcond} can be obtained as \eqref{limpdfsplifin} when $\phi_{inc} > 1$. \\
Similarly from \eqref{limcdfintsonoff} and \eqref{limpdfsplit}, when  $0 \leq x<M$, the function $G(x)$ is given by:
  \begin{align}
 G(x)= \hspace{0.1 cm}& e^{-W_{2}\Gamma^{\prime}_{th}}\int_{u=0}^{x}F_{X}\left( x-u\right)g_{1}\left( u\right)du +\left( 1-e^{-W_{2}\Gamma^{\prime}_{th}}\right)\nn \\
 &\left[\int_{u=M}^{M+x}F_{X}\left( x-u+M\right)g_{2}\left( u\right)du + \int_{u=0}^{x}F_{X}\left( x-u\right)g_{1}\left(u\right)du \right] \hspace{0.1 cm}     0 \leq x < M
 \label{cdfonoffpar}
 \end{align}
 After cancellation of some terms in \eqref{cdfonoffpar} we get:
 \begin{align}
  G(x)= & \int_{u=0}^{x}F_{X}\left( x-u\right)g_{1}\left( u\right)du +\nn \\
 &\left( 1-e^{-W_{2}\Gamma^{\prime}_{th}}\right)\left[\int_{u=M}^{M+x}F_{X}\left( x-u+M\right)g_{2}\left( u\right)du  \right] \hspace{0.6 cm}     0 \leq x < M.
 \label{onoffpdfsolv}
 \end{align}
 Differentiating both sides of \eqref{onoffpdfsolv} w.r.t. `$x$' and using $g_{2}\left( x\right) =ke^{Qx}$ and $f_{X}(x)=\lambda_{1}e^{-\lambda_{1}x},$  we get:
  \begin{align}
 g_{1}(x)= & \hspace{0.1 cm} \lambda_{1} \int_{u=0}^{x}e^{-\lambda_{1}\left( x-u\right) }g_{1}\left( u\right)du +\nn \\
 &\left( 1-e^{-W_{2}\Gamma^{\prime}_{th}}\right)\left[\int_{u=M}^{M+x}\lambda_{1}e^{-\lambda_{1}\left( x-u+M\right) }ke^{Qu}du \right] \hspace{0.6 cm}     0<x \leq M
 \label{cdfonoffpart}
 \end{align}
 Simplifying the second integral in \eqref{cdfonoffpart}, we obtain:
 \begin{align}
 g_{1}(x)=\lambda_{1} \int_{u=0}^{x}e^{-\lambda_{1}\left( x-u\right) }g_{1}\left( u\right)du+\frac{k\lambda_{1}e^{QM}\left( 1-e^{-W_{2}\Gamma^{\prime}_{th}}\right)}{\left( Q+\lambda_{1}\right) }\left[ e^{Qx}-e^{-\lambda_{1}x}\right]. 
 \label{onoffvolt}
 \end{align}
 The above  integral equation can be rewritten as follows:
  \begin{align}
 g_{1}(x)=\lambda_{1} \int_{u=0}^{x}e^{-\lambda_{1}\left( x-u\right) }g_{1}\left( u\right)du+r\left( x\right), 
 \label{onoffvolter}
 \end{align}
 where $r(x)=\dfrac{k\lambda_{1}e^{QM}\left( 1-e^{-W_{2}\Gamma^{\prime}_{th}}\right)}{\left(Q+\lambda_{1}\right) }\left[ e^{Qx}-e^{-\lambda_{1}x}\right]$.
\eqref{onoffvolter} is a Volterra type integral equation of the second kind whose solution is given by \cite{R.Morsi2014}:
  \begin{align}
 g_{1}(x)=r\left( x\right)+\lambda_{1}\int_{t=0}^{x}r(t) dt. 
 \label{valofrx}
 \end{align}
 Substituting $r(x)$ in  \eqref{valofrx} we get:
 \begin{align}
 g_{1}(x)=\dfrac{k\lambda_{1}e^{QM}\left( 1-e^{-W_{2}\Gamma^{\prime}_{th}}\right)}{\left( Q+\lambda_{1}\right) }\left[ e^{Qx}-e^{-\lambda_{1}x}\right]+ \lambda_{1}\int_{t=0}^{x}\dfrac{k\lambda_{1}e^{QM}\left( 1-e^{-W_{2}\Gamma^{\prime}_{th}}\right)}{\left(Q+\lambda_{1}\right) }\left[ e^{Qt}-e^{-\lambda_{1}t}\right]dt.
 \label{onoffint}
 \end{align}
  Simplifying \eqref{onoffint} we get:
  \begin{align}
   g_{1}(x)=\frac{k \lambda_{1}e^{QM}\left( 1-e^{-W_{2}\Gamma^{\prime}_{th}}\right)\left( e^{Qx}-1\right)}{Q}.
   \label{funcg1x} 
  \end{align}
  To find $k$  in $g_{2}\left( x\right) $ and $g_{1}\left( x\right) $ we impose  the unit area property of PDF $g(x)$ i.e. 
  \begin{align}
  \int_{x=0}^{\infty}g\left( x\right) dx=\int_{x=0}^{M}g_{1}\left( x\right)dx +\int_{x=M}^{\infty}g_{2}\left( x\right) dx=1
  \end{align}
  \begin{align}
\frac{k\lambda_{1}e^{QM}\left( 1-e^{-W_{2}\Gamma^{\prime}_{th}}\right) }{Q} \int_{x=0}^{M}\left( e^{Qx}-1\right) dx +\int_{x=M}^{\infty}ke^{Qx} dx=1.
\label{onoffkval}
  \end{align}
 Simplifying \eqref{onoffkval}, the value of $k$ is obtained as:
  \begin{align}
  k=\frac{-Q}{M\left(Q+\lambda_{1}\left( 1-e^{-W_{2}\Gamma^{\prime}_{th}}\right)\right)}.
  \end{align}
  Substituting $k$ value in \eqref{funcg1x} and using \eqref{onoffeqcond}  we obtain:
  \begin{align}
  g_{1}\left( x\right) =\frac{1}{M}\left( 1-e^{Qx}\right). 
  \label{funcg1xfin}
  \end{align}
  Similarly after substituting $k$ value in $g_{2}\left( x\right)=ke^{Qx} $ we get:
  \begin{align}
  g_{2}\left( x\right) =\frac{-Qe^{Qx}}{M\left(Q+\lambda_{1}\left( 1-e^{-W_{2}\Gamma^{\prime}_{th}}\right)\right)}.
  \label{funcg2xfin}
  \end{align}
 Substituting \eqref{funcg1xfin} and \eqref{funcg2xfin} in \eqref{limpdfsplit}, we obtain \eqref{limpdfsplifin}.  
 \section{Outage analysis of HSU with IBEP} \label{outage HSU arch}
Simplifying the outage probability expression  \eqref{outin}, the final outage probability expression for incremental relaying with DF protocol is given by \cite{S.S.Ikki}:
\begin{align}
P_{out}= &\Pr\left\lbrace  \gamma_{_{SR}}(i) < \Gamma_{th}\right\rbrace  \Pr\left\lbrace \gamma_{_{SD}}(i) < \Gamma^{\prime}_{th}\right\rbrace \nn \\
&+\Pr\left\lbrace  \gamma_{_{SR}}(i) \geq \Gamma_{th}\right\rbrace \Pr\left\lbrace \gamma_{_{RD}}(i)+\gamma_{_{SD}}(i) < \Gamma_{th} \right\rbrace.
\label{outgen}
\end{align} 
Clearly, 
$\Pr\left\lbrace  \gamma_{_{SR}}(i) < \Gamma_{th}\right\rbrace=1-\exp\left( - W_{4}\Gamma_{th}\right), \hspace{0.1 cm}
\Pr\left\lbrace  \gamma_{_{SD}}(i) < \Gamma^{\prime}_{th}\right\rbrace=1-\exp\left( - W_{2}\Gamma^{\prime}_{th}\right), 
\text{and} \\ \hspace{0.4 cm}\Pr\left\lbrace  \gamma_{_{SR}}(i) \geq \Gamma_{th}\right\rbrace=\exp\left( - W_{4}\Gamma_{th}\right).$ Further:
\begin{align}
\Pr\left\lbrace \gamma_{_{RD}}(i)+\gamma_{_{SD}}(i) < \Gamma_{th} \right\rbrace=\Pr\left\lbrace \frac {P_{_{R}}(i)|h_{_{RD}}(i)|^{2}}{\sigma^{2}}+\gamma_{_{SD}}(i) < \Gamma_{th}
\right\rbrace. 
\label{rel}
\end{align}
Since $P_{_{R}}(i)$ in \eqref{reltrb} depends on $\phi_{inc}$, it will be convenient to study the cases when $\phi_{inc} > 1$ and $\phi_{inc} \leq 1$
 separately.\\
\underline{{\bf Case 1:}} $ \hspace{1 cm} \phi_{inc} > 1 $\\
Substituting the relay transmit power $ P_{_{R}}(i)$ from \eqref{reltrb} into \eqref{rel} gives:
\begin{align}
\Pr\left\lbrace \frac {P_{_{R}}(i)|h_{_{RD}}(i)|^{2}}{\sigma^{2}}+\gamma_{_{SD}}(i) < \Gamma_{th}
\right\rbrace =&\Pr\left\lbrace \frac {2\min\left( B\left( i\right),M\right)|h_{_{RD}}(i)|^{2}}{\sigma^{2}}+\gamma_{_{SD}}(i) < \Gamma_{th}\right\rbrace  \nn \\
=&\Pr\left\lbrace \frac {2 B\left( i\right)|h_{_{RD}}(i)|^{2}}{\sigma^{2}}+\gamma_{_{SD}}(i) < \Gamma_{th},B\left( i\right)<M\right\rbrace \nn \\
+&\Pr\left\lbrace \frac {2 M|h_{_{RD}}(i)|^{2}}{\sigma^{2}}+\gamma_{_{SD}}(i) < \Gamma_{th},B\left( i\right)\geq M\right\rbrace
\label{outpr}
\end{align}
The first term in \eqref{outpr} can be simplified using
$ \Psi_{_{RD}}\left( i\right) \triangleq \dfrac{2|h_{_{RD}}(i)|^{2}}{\sigma^{2}}.$
 Denote the PDFs of  $\Psi_{_{RD}}\left( i\right)$ and $\gamma_{_{SD}}\left( i\right)$
by  $f_{\Psi_{_{RD}}\left( i\right)}\left( y\right)=W_{3}e^{-W_{3}y} $ and $f_{\gamma_{_{SD}}\left( i\right)}\left( z\right)=W_{2}e^{-W_{2}z}, $ and  the limiting PDF of $B\left( i\right)$ by $g\left( x\right)=Z_{1}e^{-Z_{1}x}. $  Then \eqref{outpr} can be simplified as follows:
\begin{align}
&\Pr\left\lbrace  B\left( i\right)\Psi_{_{RD}}\left( i\right) +\gamma_{_{SD}}(i) < \Gamma_{th}, B\left( i\right)<M\right\rbrace \nn \\
&= \int_{x=0}^{M} \int_{y=0}^{\frac{\Gamma_{th}-z}{x}}\int_{z=0}^{\Gamma_{th}}g\left( x\right) f_{\Psi_{_{RD}}\left( i\right)}\left( y\right) f_{\gamma_{_{SD}}\left( i\right)}\left( z\right) dx dy dz \nn \\
&=\left( 1-e^{-Z_{1}M} \right) \left( 1-e^{-W_{2}\Gamma_{th}}\right) 
+Z_{1}W_{2}\int_{x=0}^{M}\frac{xe^{-Z_{1}x}[e^{-W_{2}\Gamma_{th}}-e^{-\frac  {W_{3}\Gamma_{th}}{x}}]}{(xW_{2}-W_{3})}dx,
\label{term1}
\end{align}
where $Z_{1}$ is given by \eqref{limcdf}. \\
The second term in \eqref{outpr} can be simplified  using
$ V_{_{RD}}\left( i\right) \triangleq \dfrac{2M|h_{_{RD}}(i)|^{2}}{\sigma^{2}}.$\\
Denote the PDF of $V_{_{RD}}\left( i\right)$ by $f_{V_{_{RD}}\left( i\right)}\left( y\right)=W_{1}e^{-W_{1}y}. $  Then the second term in \eqref{outpr} can be simplified as follows:
\begin{align}
&\Pr\left\lbrace V_{_{RD}}\left( i\right) +\gamma_{_{SD}}(i) < \Gamma_{th},B\left( i\right) \geq M\right\rbrace \nn \\
&= \int_{x=M}^{\infty} \int_{y=0}^{\Gamma_{th}-z}\int_{z=0}^{\Gamma_{th}}g\left( x\right) f_{V_{_{RD}}\left( i\right)}\left( y\right) f_{\gamma_{_{SD}}\left( i\right)}\left( z\right) dx dy dz \nn \\
&=e^{-Z_{1}M} \left\lbrace \left( 1-e^{-W_{2}\Gamma_{th}}\right) 
+\frac{W_{2}\left[ e^{-W_{1}\Gamma_{th}}-e^{-W_{2}\Gamma_{th}}\right]}{(W_{1}-W_{2})} \right\rbrace .
\label{term2}
\end{align}
Substituting \eqref{term1}, \eqref{term2} into \eqref{outpr} then \eqref{rel}, we get:
\begin{align}
&\Pr\left\lbrace \gamma_{_{RD}}(i)+\gamma_{_{SD}}(i)\leq \Gamma_{th}, \right\rbrace=e^{-Z_{1}M} \left\lbrace \left( 1-e^{-W_{2}\Gamma_{th}}\right) 
+\frac{W_{2}\left[ e^{-W_{1}\Gamma_{th}}-e^{-W_{2}\Gamma_{th}}\right]}{(W_{1}-W_{2})} \right\rbrace \nn \\
&+\left( 1-e^{-Z_{1}M} \right) \left( 1-e^{-W_{2}\Gamma_{th}}\right) 
+Z_{1}W_{2}\int_{x=0}^{M}\frac{xe^{-Z_{1}x}[e^{-W_{2}\Gamma_{th}}-e^{-\frac  {W_{3}\Gamma_{th}}{x}}]}{(xW_{2}-W_{3})}dx.
\label{ergc}
\end{align}
Substituting \eqref{ergc} in \eqref{outgen} we obtain \eqref{hsuoutag1}.\\
\underline{{\bf Case 2:}} $ \hspace{1 cm} \phi_{inc} \leq 1 $\\
Substituting $ P_{_{R}}(i)$ from \eqref{reltrb}  into \eqref{rel} and then \eqref{rel}, we get:
\begin{align}
&\Pr\left\lbrace \gamma_{_{RD}}(i)+\gamma_{_{SD}}(i) < \Gamma_{th}
\right\rbrace =\Pr\left\lbrace \frac {2 M|h_{_{RD}}(i)|^{2}}{\sigma^{2}}+\gamma_{_{SD}}(i) < \Gamma_{th}\right\rbrace\nn \\
&=\int_{y=0}^{\Gamma_{th}-z}\int_{z=0}^{\Gamma_{th}}f_{V_{_{RD}}\left( i\right)}\left( y\right) f_{\gamma_{_{SD}}\left( i\right)}\left(z\right) dy dz 
= \left( 1-e^{-W_{2}\Gamma_{th}}\right) 
+\frac{W_{2}\left[ e^{-W_{1}\Gamma_{th}}-e^{-W_{2}\Gamma_{th}}\right]}{(W_{1}-W_{2})} 
\label{term3}
\end{align}
Substituting \eqref{term3} into \eqref{outgen} we obtain \eqref{hsuoutage2}. 
\section{Outage analysis of HSU with IOFP}\label{outage on-off}
\label{onoff outage infinite}
When $\phi_{inc} > 1$, substituting  $ P_{_{R}}(i)$ from \eqref{onofftra} into \eqref{rel} we get:
\begin{align}
\Pr\left\lbrace \frac {P_{_{R}}(i)|h_{_{RD}}(i)|^{2}}{\sigma^{2}}+  \gamma_{_{SD}}(i) < \Gamma_{th}
\right\rbrace 
=& \Pr\left\lbrace \frac {2 M|h_{_{RD}}(i)|^{2}}{\sigma^{2}}+  \gamma_{_{SD}}(i) < \Gamma_{th},B\left( i\right)\geq M\right\rbrace \nn \\
& +\Pr\left\lbrace   \gamma_{_{SD}}(i)\leq \Gamma_{th},B\left( i\right)<M\right\rbrace.
\label{outpronoff}
\end{align}
We first note that using $g\left( x\right)$ from \eqref{limpdfsplifin} and \eqref{onoffeqcond}, $ \Pr\left\lbrace B\left( i\right)\geq M\right\rbrace$ can be simplified as follows:
\begin{align}
\Pr\left\lbrace B\left( i\right)\geq M\right\rbrace=&\int_{M}^{\infty}g_{2}(x)dx=\int_{M}^{\infty}\frac{-Qe^{Qx}}{M\left( Q+\lambda_{1}\left( 1-e^{-W_{2}\Gamma^{\prime}_{th}}\right)\right)}dx 
\nn \\
=&\frac{e^{QM}}{M\left( Q+\lambda_{1}\left( 1-e^{-W_{2}\Gamma^{\prime}_{th}}\right)\right)}=\frac{e^{QM}}{M\lambda_{1}\left( 1-e^{-W_{2}\Gamma^{\prime}_{th}}\right)e^{QM}}
=\frac{1}{\phi_{inc}}
\end{align}
Let  $ V_{_{RD}}\left( i\right) \triangleq \dfrac{2M|h_{_{RD}}(i)|^{2}}{\sigma^{2}}.$  The first term in \eqref{outpronoff} can be simplified using independence of $B\left( i\right) $, $\gamma_{_{SD}}(i)$ and $|h_{_{RD}}(i)|^{2}$ as follows:
\begin{align}
&\Pr\left\lbrace \frac {2 M|h_{_{RD}}(i)|^{2}}{\sigma^{2}}+  \gamma_{_{SD}}(i) < \Gamma_{th},B\left( i\right) \geq M\right\rbrace 
= \Pr\left\lbrace V_{_{RD}}\left( i\right) +  \gamma_{_{SD}}(i) < \Gamma_{th}\right\rbrace \Pr\left\lbrace B\left( i\right) \geq M\right\rbrace\nn \\
&=\frac{1}{\phi_{inc}} \left\lbrace \left( 1-e^{-W_{2}\Gamma_{th}}\right) 
+\frac{W_{2}\left[ e^{-W_{1}\Gamma_{th}}-e^{-W_{2}\Gamma_{th}}\right]}{(W_{1}-W_{2})} \right\rbrace. 
\label{onoffinfi1}
\end{align}
The second term in \eqref{outpronoff} can be simplified as follows:
\begin{align}
\Pr\left\lbrace   \gamma_{_{SD}}(i) < \Gamma_{th},B\left( i\right)<M\right\rbrace=&\Pr\left\lbrace   \gamma_{_{SD}}(i) < \Gamma_{th}\right\rbrace \Pr\left\lbrace B\left( i\right)<M\right\rbrace \nn \\
=& \left( 1-e^{-W_{2}\Gamma_{th}}\right) \left( 1-\frac{1}{\phi_{inc}}\right) 
\label{onoffinf2}
\end{align}
Substituting \eqref{onoffinfi1} and \eqref{onoffinf2} into  \eqref{outpronoff}, we get: 
\begin{align}
\Pr\left\lbrace   \gamma_{_{RD}}(i)+  \gamma_{_{SD}}(i)< \Gamma_{th}, \right\rbrace= &\hspace{0.1 cm}\frac{1}{\phi_{inc}} \left\lbrace \left( 1-e^{-W_{2}\Gamma_{th}}\right) 
+\frac{W_{2}\left[ e^{-W_{1}\Gamma_{th}}-e^{-W_{2}\Gamma_{th}}\right]}{(W_{1}-W_{2})} \right\rbrace \nn \\
&+\left( 1-\frac{1}{\phi_{inc}} \right) \left( 1-e^{-W_{2}\Gamma_{th}}\right). 
\label{onoffinfout}
\end{align}
Substituting  \eqref{onoffinfout} into \eqref{outgen}, we obtain \eqref{hsuonoffoutag1}.\\ 
\bibliographystyle{IEEEtran}
\bibliography{Journalbbl}

\end{document}